\documentclass[12pt]{article}
\usepackage{amsmath,amssymb,amsthm}
\usepackage[margin=1in]{geometry}
\usepackage{booktabs}
\usepackage{enumitem}
\usepackage{setspace}
\usepackage{booktabs,array}
\newcolumntype{L}[1]{>{\raggedright\arraybackslash}p{#1}}
\usepackage{float}
\usepackage[hidelinks]{hyperref}
\usepackage[title]{appendix}

\newtheorem{theorem}{Theorem}
\newtheorem{lemma}{Lemma}
\newtheorem{proposition}{Proposition}
\newtheorem{corollary}{Corollary}
\newtheorem{definition}{Definition}

\newtheorem{example}{Example}
\newtheorem{fact}{Fact}

\newcommand{\C}{\mathcal C}
\newcommand{\M}{\mathcal M}
\newcommand{\T}{\Theta}
\newcommand{\Lcal}{\mathcal L}
\newcommand{\preceqM}{\preceq_M}

\title{Certifying Priority in Evidence-Based Allocation}
\author{{\large Charles Po-Cheng Huang\thanks{Department of Economics, National Central University, cpchuang@ncu.edu.tw} \:\:\: Gunhaeng Lee\thanks{Institute of Economics, Academia Sinica, glee@econ.sinica.edu.tw}\vspace*{1ex}}}
\date{September 15, 2026}

\begin{document}
\maketitle

\begin{abstract}
Many allocation rules give priority to applicants satisfying favored
conditions, including disability, homelessness, and veteran status. Because
these conditions are not directly observed, a rule ranking applicants by
characteristics must first determine what submitted certificates prove. This
paper studies allocation with hard but partial evidence: a disability document may prove disability without proving the absence of a disqualifying tenancy record. A claim is certifiable exactly when every intended claimant has a feasible certificate excluding all outsiders. Chain rules---scores,
lexicographic orders, and scalar rankings---require certification of every priority cutoff, whereas reserve rules require certification of each eligibility class. Allocation rules thus select the claim family to be certified. When certification fails, canonical repairs demote unsupported chain priorities and delete unsupported reserve labels. Nonempty reserve eligibility classes can be represented as upper contours of a single chain if and only if they are nested. A housing application illustrates the consequences.
\end{abstract}
\noindent\textbf{JEL Classification:} D82, D47, D45.

\noindent\textbf{Keywords:} hard evidence; partial evidence;
evidence-based allocation; priority systems.

\newpage

\section{Introduction}

Priority rules for scarce public resources are usually written in terms of applicant characteristics. A housing authority may favor applicants who are disabled, homeless, veterans, have qualifying family status, or have no disqualifying tenancy record. But these characteristics are private. The authority sees only the documents, records, database responses, and other evidence that applicants submit. Hard evidence guarantees the authenticity of submitted content; it does not tell the authority whether the applicant has submitted everything relevant. Even a requirement to ``submit all evidence'' is therefore ineffective when the authority cannot verify that the submission is complete. Before an institution can rank applicants or reserve capacity for them, it must decide what their evidence is sufficient to prove.

This paper asks which priority rules can be supported by partial hard evidence. The answer depends on the architecture of the allocation rule. A \emph{chain priority system}, such as a score, a lexicographic rule, or another scalar ranking, places all applicants on one ordered scale. A chain only needs to verify an applicant's position relative to each cutoff; that is, every applicant belonging at or above a cutoff must hold a certificate that no applicant below it could have produced. The set of applicants at or above a given cutoff is that cutoff's \emph{upper contour}, so the chain's evidentiary burden is to certify these upper contours, one for each cutoff. A \emph{reserve system} instead preserves separate eligibility labels and assigns label-specific capacities. Its burden is to certify each reserve's eligibility class. Because the two architectures ask applicants to prove different things, the same applicant characteristics and evidence technology may support one and not the other. The paper characterizes this compatibility and constructs a canonical conservative evidence-safe revision when the target rule is not compatible with the available evidence. Allocation design is therefore also evidence design: the rule determines not only who should receive priority, but also which claims applicants must be able to establish.

We formalize this problem using a finite hard-evidence technology. An applicant's private type records the categories to which she belongs. Each type can submit certain certificate contents and cannot fabricate any other content, but may withhold feasible evidence. Every content therefore has a \emph{certifier set}: the set of types capable of submitting it. A type class is content-certifiable if the planner can accept a set of contents that admits all and only the types in that class. The basic criterion is simple. A class is certifiable if and only if every intended claimant has some feasible content that no outsider can submit, or equivalently, if the class is a union of certifier sets. Complete type revelation is unnecessary; what matters is whether the evidence separates the claim relevant for allocation. At the same time, several separately informative contents need not certify their conjunction since an applicant can already pass a certificate test by submitting any one accepted content. Joint claims require a jointly readable content. Under a full-report normality condition, this problem disappears, and certifiability reduces to monotonicity with respect to evidence mimicry. 

Our first main result characterizes chain priority systems (Theorem~\ref{thm:chain-burden}). A target chain is safely implementable if and only if every proper upper contour of its ranking is content-certifiable. The canonical implementation uses a skeptical reading: each submitted content is assigned the lowest priority held by any type capable of submitting it. A type attains its target class only when some feasible content excludes every type below that class. The reason upper contours, rather than exact classes, matter is scarcity. At a binding priority boundary, an applicant needs to establish that she belongs on or above the relevant side of the cutoff. Remaining pooled with higher-priority types is harmless; remaining pooled with a lower-priority type can change who receives the marginal unit. The same condition is also necessary and sufficient for dominant-strategy implementation of the full-information chain allocation. Hence, allowing a mechanism to condition on the entire profile of submitted contents does not overcome an uncertifiable priority boundary. 

Our second main result gives the parallel characterization for reserve systems (Theorem~\ref{thm:reserve-burden}). A reserve asks whether an applicant is inside the eligibility class since each label has its own capacity. Under label-specific applications, a target eligibility map is safely implementable if and only if every reserve eligibility class is content-certifiable. The skeptical reserve reading assigns a label to a content only when every type capable of submitting that content is entitled to the label. This eligibility characterization does not depend on how the downstream rule rations an oversubscribed reserve, allocates open capacity, or treats applicants eligible for several reserves. When the condition holds, applying for every certifiable label is weakly dominant and reproduces the full-information allocation under any feasible, label-monotone downstream rule. The application protocol nevertheless matters. If the institution requires one common content to establish an applicant's entire label set, that single content must jointly establish all and only those labels; separate contents for separate labels no longer suffice. 

The third main result addresses targets that fail these characterizations (Theorem~\ref{thm:repairs}). Holding the evidence technology fixed, we derive a canonical conservative repair for each architecture. For a chain, a skeptical closure operator makes the smallest pointwise demotions needed for safe implementation. For reserves, a certifiable-interior operator deletes unsupported labels and retains, label by label, the largest certifiable part of each target eligibility class. Both repairs are computed in one step and are fixed points of their respective operators. The original rule is safely implementable exactly when the repair leaves it unchanged. A repair does not recover an infeasible target allocation; it identifies the pointwise closest conservative revision of the target that the available evidence can support. 

The applications show why the content of evidence, rather than simply its amount, matters. Positive evidence can establish membership in a favored category but cannot establish the absence of an adverse condition. Negative evidence can establish clearance but not favorable membership. A rule that rewards both---for example, disability together with a clean tenancy record---requires a signed or otherwise jointly informative content that certifies presence and absence together. In the housing application, a disability document alone cannot justify clean-history points, and a tenancy clearance alone cannot justify disability points. The framework identifies precisely which priority claims must be weakened when the institution cannot read these facts jointly. 

The analysis also separates chain and reserve systems even when all relevant claims are certifiable. Nonempty reserve eligibility classes can be represented as upper contours of a single chain if and only if they are nested by inclusion. In the housing application, disability and homelessness reserves with the same clearance requirement cross: neither eligibility class contains the other. No score, lexicographic rule, or other scalar ranking can represent both classes through priority cutoffs. For every proposed chain, at least one class generates a single-reserve contest whose allocation differs from the chain. This disagreement is not caused by deficient evidence or poorly chosen score weights; it reflects the restriction to one ordered priority scale.

\paragraph{Related literature.}
The paper relates to work on hard evidence, partial provability, mechanisms
with evidence, certifiable communication, evidence-reading mechanisms, and
falsification-proof non-market allocation.
Green and Laffont (1986), Bull and
Watson (2007), Ben-Porath and Lipman (2012), and Ben-Porath, Dekel, and
Lipman (2019) study mechanism design when agents can prove some facts but not
others. Ben-Porath, Dekel, and Lipman (2014) study optimal allocation when the principal can verify agents' private information at a cost. Hagenbach, Koessler, and Perez-Richet (2014) study certifiable
pre-play communication and full disclosure. Koessler and Perez-Richet (2019)
study reading mechanisms that apply a social choice function to a consistent
interpretation of evidence. Perez-Richet and Skreta (2026) study optimal
non-market allocation when eligibility-relevant scores are manipulable and
falsification is costly. Their problem is to design rules that deter score
manipulation; ours is to characterize the proof burden imposed by a fixed
allocation rule under partial hard evidence. This paper specializes the
reading problem to scarce-unit allocation and shows how allocation rules select
the claim family: upper contours for chains, eligibility labels for reserves,
and joint presence--absence bundles for signed rules.

The paper also relates to the market design literature on allocating scarce
capacity under priorities. Hafalir, Yenmez, and Yildirim (2013) and Echenique and Yenmez (2015) study affirmative action and controlled-choice constraints in school assignment. Kominers and S\"onmez (2016) analyze slot-specific priorities, and Dur, Kominers, Pathak, and S\"onmez (2018) show how the precedence order over reserved seats shapes the resulting allocation. S\"onmez and Yenmez (2022) study vertical, horizontal, and overlapping reservations in Indian affirmative action, and Pathak, S\"onmez, \"Unver, and Yenmez (2024) design reserve systems for pandemic rationing of vaccines, ventilators, and antiviral treatments. These papers take the priority structure or the eligibility categories as given and study the allocation rule built on top of them. 
We ask the prior question of how such a priority can be formed at all. Because applicants are endowed with privately known categorical types, the
planner learns categories only through the certificates applicants choose to
disclose, so evidence disclosure is the first step in forming the priority. The priority structure those papers take as a primitive is, in our model, an object that the evidence technology must be able to support.

\section{Evidence and Certifiable Claims}

\subsection{Evidence Technology}

There is a finite set of categories $\C=\{C_1,\ldots,C_K\}.$ An applicant's type is the set of categories to which she truly belongs. We write $\theta\subseteq\C$ and let $\T=2^\C$ denote the finite type space. Types
are private information. The inclusion order on category sets is the natural partial order on $\T$: $\theta\subseteq\theta'$ means that type $\theta'$ belongs to every category that type $\theta$ belongs to, and possibly more.

A certificate is a verifiable document, message, or claim submitted by an
applicant. An allocation rule uses only the certificate's relevant content.
Two documents may have different labels, issuers, or formats but verify the
same facts. A content-based rule treats such documents alike. If labels or
issuers carry independent information and the planner is allowed to use it,
the relevant content space should include those labels.

Let $\M$ be the finite set of certificate contents, with a typical element
$m$. The evidence technology is a correspondence $M: \T \rightrightarrows \M$, and $M(\theta)$ gives the contents feasible for type $\theta$. 
Evidence is hard: type $\theta$ can submit any $m\in M(\theta)$ and cannot produce any content outside $M(\theta)$. 
Because applicants may withhold evidence, $M(\theta)$ may contain contents that reveal only part of what the type could prove. A content is the
evidentiary object that the planner evaluates jointly. It may be a single
physical document, a database response, or a portfolio of documents. Thus,
the evidence technology determines which individual documents or portfolios
are available as jointly readable contents in $\M$.

Each content $m$ defines its \emph{certifier set}
\[
  M^{-1}(m)=\{\theta\in\T:m\in M(\theta)\},
\]
the set of types that could have submitted it. The certifier set describes
what the content rules out: $m$ separates $\theta$ from $\theta'$ if
$\theta\in M^{-1}(m)$ and $\theta'\notin M^{-1}(m)$. A content is more
informative when its certifier set is smaller.

We use two harmless feasibility conventions. Each type has at least one
feasible content, and every content is feasible for at least one type. Contents
with empty certifier sets can be deleted, and a type with no feasible content
can be accommodated by adding a null content available to every type.

\subsection{Content-Certifiability}

At the certification stage, the planner may need to verify a binary claim
about one applicant; for example, that she is disabled and has no
disqualifying tenancy record. Satisfying both conditions, only one of them, or
neither may place her in a different priority class. We represent each such
claim by a type class $D\subseteq\T$. The class formulation is
intentional. A certificate may establish a
priority-relevant property without identifying the applicant's entire type.
Exact type certification is the special case $D=\{\theta\}$.

\begin{definition}\label{def:certificatetest}
A \emph{certificate test} is a set $A\subseteq\M$ of contents that the planner accepts as establishing a claim. The test is \emph{complete} for
$D\subseteq\T$ if
\[
  M(\theta)\cap A\neq\emptyset
  \qquad\text{for every }\theta\in D,
\]
and \emph{sound} for $D$ if
\[
  M(\theta)\cap A=\emptyset
  \qquad\text{for every }\theta\notin D.
\]
The test certifies $D$ if it is both complete and sound, or equivalently, if
\[
  D=\{\theta\in\T:M(\theta)\cap A\neq\emptyset\}.
\]
We say that $D$ is \emph{content-certifiable} if some certificate test
certifies it.
\end{definition}

In classification terms, completeness rules out false exclusions of intended
claimants, while soundness rules out false inclusions of unintended claimants.

\begin{example}[When separate proofs do not certify a conjunction]
\label{ex:separate-documents-no-conjunction}
Let $\C=\{d,t\}$, where $d$ denotes disability status and $t$ denotes a
disqualifying tenancy record. The claim ``disabled and clear'' corresponds to
the singleton class
\[
  D^{\mathrm{clear}}=\{\{d\}\}.
\]

Consider the content space
\[
  \M=\{m_0,m_d,m_{\bar t}\}.
\]
Here, $m_0$ is a null content, $m_d$ certifies disability, and $m_{\bar t}$
certifies the absence of a disqualifying tenancy record. Their certifier sets are
\[
  M^{-1}(m_0)=\T,
  \qquad
  M^{-1}(m_d)=\{\{d\},\{d,t\}\},
  \qquad
  M^{-1}(m_{\bar t})=\{\emptyset,\{d\}\}.
\]
No other content is available.

Type $\{d\}$ can establish disability by submitting $m_d$ and can establish
clean history by submitting $m_{\bar t}$. Nevertheless,
$D^{\mathrm{clear}}$ is not content-certifiable. Accepting $m_d$ also admits
the disabled type with a disqualifying record, while accepting $m_{\bar t}$
also admits the nondisabled type with a clean record. Accepting both contents
does not solve the problem because they are alternative ways of passing the
test:
\[
  M^{-1}(m_d)\cup M^{-1}(m_{\bar t})
  =
  \{\emptyset,\{d\},\{d,t\}\},
\]
rather than the intersection
\[
  M^{-1}(m_d)\cap M^{-1}(m_{\bar t})
  =
  \{\{d\}\}.
\]
Thus, each component claim is separately certifiable, but the conjunction is
not. The missing object is a feasible content whose certifier set combines the
information in $m_d$ and $m_{\bar t}$.
\end{example}

The following elementary criterion converts the existence of a certificate
test into a type-by-type separation condition. It will be used repeatedly to
identify the certification burdens imposed by allocation rules.

\begin{lemma}[Certification criterion]\label{lem:cert}
For any $D\subseteq\T$, let $A_D^*:=\{m\in\M:M^{-1}(m)\subseteq D\}$
and its \emph{certifiable interior}
\[
  \mathcal I(D)
  :=
  \bigcup_{m\in A_D^*}M^{-1}(m).
\]
Then $A_D^*$ is the largest sound test for $D$, and the following are
equivalent:
\begin{enumerate}[label=(\roman*)]
  \item $D$ is content-certifiable;
  \item $A_D^*$ is complete for $D$;
  \item $\mathcal I(D)=D$.
\end{enumerate}
When these conditions hold, $A_D^*$ is the largest test that certifies $D$.
Moreover, $\mathcal I(D)$ is the largest content-certifiable subclass of $D$.
\end{lemma}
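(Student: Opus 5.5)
The plan is to first rewrite soundness and completeness in terms of certifier sets, and then read off every claim of Lemma~\ref{lem:cert} from that translation. For any test $A\subseteq\M$, let $S(A):=\{\theta:M(\theta)\cap A\neq\emptyset\}$. Since $\theta\in S(A)$ if and only if $\theta\in M^{-1}(m)$ for some $m\in A$, we have $S(A)=\bigcup_{m\in A}M^{-1}(m)$. Three consequences follow directly:
\begin{itemize}
\item $A$ is sound for $D$ exactly when $S(A)\subseteq D$, that is, when $M^{-1}(m)\subseteq D$ for every $m\in A$;
\item $A$ is complete exactly when $D\subseteq S(A)$;
\item $A$ certifies $D$ exactly when $S(A)=D$.
\end{itemize}

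\textbf{Largest sound test.} By the soundness characterization, a test is sound if and only if $A\subseteq A_D^*$. Moreover, $A_D^*$ is itself sound by definition. Hence $A_D^*$ is the largest sound test, and $S(A_D^*)=\mathcal I(D)\subseteq D$.

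\textbf{The equivalences.} For (i)$\Rightarrow$(ii), take a certifying test $A$. It is sound, so $A\subseteq A_D^*$. Since $S$ is monotone in $A$, we get $D=S(A)\subseteq S(A_D^*)$, so $A_D^*$ is complete. For (ii)$\Rightarrow$(iii), completeness gives $D\subseteq\mathcal I(D)$, and the reverse inclusion was noted above. For (iii)$\Rightarrow$(i), $A_D^*$ is a sound test with $S(A_D^*)=D$, so it certifies $D$. Under these conditions, every certifying test is sound and therefore contained in $A_D^*$, while $A_D^*$ itself certifies $D$. So $A_D^*$ is the largest certifying test.

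\textbf{Largest certifiable subclass.} Three facts are needed.
\begin{itemize}
\item $\mathcal I(D)\subseteq D$, as shown above.
\item $\mathcal I(D)$ is certifiable, because $A_D^*$ certifies it: $S(A_D^*)=\mathcal I(D)$ by definition.
\item If $E\subseteq D$ is certifiable by some test $B$, then every $m\in B$ satisfies $M^{-1}(m)\subseteq E\subseteq D$. So $B\subseteq A_D^*$, and therefore $E=S(B)\subseteq\mathcal I(D)$.
\end{itemize}
No step is a real obstacle. The only care needed is to write everything through the identity $S(A)=\bigcup_{m\in A}M^{-1}(m)$ and the monotonicity of $S$, and to keep soundness (an upper bound on $S(A)$) distinct from completeness (a lower bound on $S(A)$).
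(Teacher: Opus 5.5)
Your proposal is correct and follows essentially the same route as the paper. Both identify the passing class of a test with $\bigcup_{m\in A}M^{-1}(m)$, note that soundness is exactly $A\subseteq A_D^*$, and derive the equivalences and both maximality claims from this; the only cosmetic difference is the last step, where you bound a certifiable $E\subseteq D$ through its certifying test $B\subseteq A_D^*$, whereas the paper applies condition (iii) to $\widetilde D$ and enlarges the index set of the union from $A^*_{\widetilde D}$ to $A^*_D$.
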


Lemma~\ref{lem:cert} provides the basic link between the evidence technology
and the claims required by an allocation rule. Rather than searching over all
possible certificate tests, it reduces certifiability to a type-by-type
separation condition: every intended claimant must possess some feasible
content that excludes all types outside the claimed class. The lemma is also
constructive. It identifies the maximal sound test $A^*_D$ and, when the target
class is not certifiable, the certifiable interior $\mathcal I(D)$, which
retains exactly the largest subclass that can be admitted without granting the
same claim to an unintended type. These objects provide the common foundation
for the results that follow: Theorems~\ref{thm:chain-burden}
and~\ref{thm:reserve-burden} apply the criterion to ranking upper contours and
reserve eligibility classes, respectively, while Theorem~\ref{thm:repairs}
uses $\mathcal I(D)$ to construct the canonical evidence-safe contraction of
reserve eligibility.


\subsection{Mimicry and Full Reports}

A transparent obstruction arises when one type can reproduce every content
available to another. We define the evidence-induced mimicry preorder.
\begin{definition}\label{def:mimicrypreo}
    Let $\preceqM$ be the evidence-induced mimicry preorder which has the following relation: 
    \[
      \theta\preceqM\theta'
      \quad\Longleftrightarrow\quad
      M(\theta)\subseteq M(\theta').
    \]   
\end{definition}
Thus, $\theta\preceqM\theta'$ means that $\theta'$ can mimic $\theta$: every
content feasible for $\theta$ is also feasible for $\theta'$. This preorder is
distinct from category inclusion. Category inclusion records which categories
a type has; mimicry records which contents it can produce. 

The mimicry relation gives an immediate observation of the necessary condition for certification. 
\begin{fact}
    Every content-certifiable class must be upward closed under $\preceqM$\footnote{Let $(X,\preceq)$ be a preordered set. A set $U \subseteq X$ is upward closed if, for every $u \in U$ and $x \in X$, $u \preceq x$ implies $x \in U$.}.
\end{fact}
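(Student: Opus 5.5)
The plan is to argue directly from Definition~\ref{def:certificatetest}, by contradiction. Alternatively, we can read the Fact off the union characterization in Lemma~\ref{lem:cert}.

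\emph{Direct route.} Suppose $D$ is content-certifiable, and let $A$ be a test that certifies it, so that
\[
  D=\{\theta\in\T:M(\theta)\cap A\neq\emptyset\}.
\]
Take $\theta\in D$ and $\theta'\in\T$ with $\theta\preceqM\theta'$, meaning $M(\theta)\subseteq M(\theta')$. Completeness of $A$ for $D$ gives some content $m\in M(\theta)\cap A$. Mimicry then puts $m\in M(\theta')$, so $M(\theta')\cap A\neq\emptyset$. If $\theta'\notin D$, this would contradict soundness of $A$. Hence $\theta'\in D$, and $D$ is upward closed.

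\emph{Lemma route.} By Lemma~\ref{lem:cert}(iii), $D=\mathcal I(D)$, so $D$ is a union of certifier sets $M^{-1}(m)$. It then suffices to note that each certifier set is upward closed under $\preceqM$: if $m\in M(\theta)\subseteq M(\theta')$, then $\theta'\in M^{-1}(m)$. A union of upward closed sets is upward closed.

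There is no real obstacle. The only point needing care is to use the preorder in the direction fixed by Definition~\ref{def:mimicrypreo}: $\theta'$ lying above $\theta$ means $\theta'$ can reproduce every content of $\theta$. The argument does not require $\preceqM$ to be antisymmetric. Types with identical feasible sets are mutually mimicking, so each lies in $D$ exactly when the other does, which is consistent with the claim.
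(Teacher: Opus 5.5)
Your direct route is correct and is essentially the paper's own argument: completeness gives some $m\in M(\theta)\cap A$, mimicry makes $m$ feasible for $\theta'$, so $\theta'$ passes the same test and lies in $D$. The paper phrases this positively rather than by contradiction, and repeats it in the first half of the proof of Proposition~\ref{prop:fullreport}. Your second route is also valid and non-circular: by Lemma~\ref{lem:cert}, $D$ is a union of certifier sets, and each certifier set is upward closed under mimicry.
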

Let $D$ be a content-certifiable set, if $\theta\in D$ and $\theta'$ can submit every content available to $\theta$, that is, $\theta \preceqM \theta'$, then by Definition~\ref{def:mimicrypreo}, $M(\theta) \subseteq M(\theta')$. Thus, any test passed by $\theta$ can also be passed by $\theta'$ as $\theta'$ can submit exactly the same content as $\theta$.

This condition is not sufficient. Upward closure rules out a single outsider
who can reproduce all of an intended claimant's evidence; certification
requires the claimant to possess a single content that every outsider fails to reproduce. Example~\ref{ex:separate-documents-no-conjunction} illustrates the difference. No outside type can reproduce both $m_d$ and $m_{\bar t}$, so no outside type mimics $\{d\}$. Yet neither content alone certifies $\{d\}$:
$m_d$ is also available to $\{d,t\}$, while $m_{\bar t}$ is also available to
$\emptyset$. Accepting both does not require the claimant to submit both,
because each accepted content is an alternative way to pass the test.
Certification would therefore require a single feasible content that is at
least as informative as the two contents together. The following condition
guarantees that such a content exists.

\begin{definition}[Full-report normality]\label{def:fullreport}
An evidence technology is \emph{full-report normal} if, for every
$\theta\in\T$ and every $m,m'\in M(\theta)$, there exists
$m''\in M(\theta)$ such that
\[
  M^{-1}(m'')
  \subseteq
  M^{-1}(m)\cap M^{-1}(m').
\]
\end{definition}

This is the finite analogue of the full-report condition of Lipman and Seppi
(1995) and evidentiary normality in Bull and Watson (2007). Since $M(\theta)$
is finite, repeated application of Definition~\ref{def:fullreport} yields a
feasible content $m^*(\theta)$ satisfying
\[
  M^{-1}\bigl(m^*(\theta)\bigr)
  =
  \bigcap_{m\in M(\theta)}M^{-1}(m)
  =
  \{\tau\in\T:\theta\preceqM\tau\}.
\]
Thus, a full report rules out every type that cannot reproduce the entire
evidence menu of $\theta$.

\begin{proposition}[Mimicry characterization under full reports]
\label{prop:fullreport}
Every content-certifiable class is upward closed under the mimicry preorder:
\[
  \theta\in D,\ \theta\preceqM\tau
  \quad\Longrightarrow\quad
  \tau\in D.
\]
If the evidence technology is full-report normal, the converse also holds.
Hence, under full-report normality,
\[
  D\text{ is content-certifiable}
  \quad\Longleftrightarrow\quad
  D\text{ is upward closed under }\preceqM.
\]
\end{proposition}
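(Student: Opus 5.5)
The proof has two parts: the necessity direction, which holds for any evidence technology, and the converse, which uses full-report normality together with Lemma~\ref{lem:cert}.

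\emph{Necessity.} Suppose $D$ is certified by some test $A$, so that $D=\{\theta:M(\theta)\cap A\neq\emptyset\}$. Take $\theta\in D$ and $\tau$ with $\theta\preceqM\tau$.
\begin{itemize}
  \item Since $\theta\in D$, there is some $m\in M(\theta)\cap A$.
  \item Since $\theta\preceqM\tau$, we have $M(\theta)\subseteq M(\tau)$, so $m\in M(\tau)\cap A$.
  \item Soundness of $A$ then forces $\tau\in D$.
\end{itemize}
This is the argument already given for the Fact, and it needs no assumption on $M$.

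\emph{Sufficiency under full-report normality.} Assume $D$ is upward closed under $\preceqM$. By Lemma~\ref{lem:cert}(iii), it suffices to show $\mathcal I(D)=D$. The inclusion $\mathcal I(D)\subseteq D$ holds by construction, since $\mathcal I(D)$ is a union of certifier sets each contained in $D$. For the reverse inclusion, fix $\theta\in D$.
\begin{itemize}
  \item Use the full-report content $m^*(\theta)\in M(\theta)$ constructed after Definition~\ref{def:fullreport}, whose certifier set is $\{\tau:\theta\preceqM\tau\}$.
  \item Upward closure gives $M^{-1}(m^*(\theta))\subseteq D$, so $m^*(\theta)\in A_D^*$.
  \item Since $\theta\in M^{-1}(m^*(\theta))$, we get $\theta\in\mathcal I(D)$.
\end{itemize}
Hence $\mathcal I(D)=D$. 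Equivalently, $A_D^*$ is complete for $D$, which is condition (ii) of the lemma. In fact the singleton-per-type test $\{m^*(\theta):\theta\in D\}$ already certifies $D$.

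The one step needing care is the existence of $m^*(\theta)$. It requires iterating Definition~\ref{def:fullreport} over the finite menu $M(\theta)=\{m_1,\dots,m_k\}$.
\begin{itemize}
  \item Define $n_1=m_1$. Given $n_j$, apply the definition to $n_j$ and $m_{j+1}$ to obtain $n_{j+1}\in M(\theta)$ with $M^{-1}(n_{j+1})\subseteq M^{-1}(n_j)\cap M^{-1}(m_{j+1})$.
  \item By induction, $M^{-1}(n_k)\subseteq\bigcap_i M^{-1}(m_i)=\{\tau:M(\theta)\subseteq M(\tau)\}$.
  \item Conversely, since $n_k\in M(\theta)$, every $\tau\succeq_M\theta$ can submit $n_k$, so $\{\tau:\theta\preceqM\tau\}\subseteq M^{-1}(n_k)$.
\end{itemize}
The two inclusions give equality. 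Only the inclusion $M^{-1}(n_k)\subseteq\{\tau:\theta\preceqM\tau\}$ is actually used in the sufficiency argument, and it follows from the induction alone.
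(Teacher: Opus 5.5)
Your proof is correct and matches the paper's argument essentially step for step. Necessity uses the same argument: a content in $M(\theta)\cap A$ is also feasible for $\tau$. Sufficiency uses the same full-report content $m^*(\theta)$, whose certifier set $\{\tau:\theta\preceqM\tau\}$ lies in $D$ by upward closure, followed by an appeal to Lemma~\ref{lem:cert}. Your explicit induction constructing $m^*(\theta)$, and your observation that $\{m^*(\theta):\theta\in D\}$ already certifies $D$, are correct details that the paper leaves implicit.
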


The first implication is universal: a certificate test cannot admit a type
while excluding another type that can reproduce all of its evidence. The
converse requires a full report. In Example~\ref{ex:separate-documents-no-conjunction},
the singleton $\{\{d\}\}$ is upward closed under mimicry, but it is not
content-certifiable because $m_d$ and $m_{\bar t}$ cannot be combined into one
jointly informative content. The example therefore violates full-report
normality.

\section{Certification Burdens of Allocation Rules}

We now turn to scarce allocation. Let $I=\{1,\ldots,n\}$ be a finite set of
applicants, and suppose the planner has
$q\in\{1,\ldots,n-1\}$ identical indivisible units. A feasible allocation gives
each applicant at most one unit and assigns at most $q$ units in total.

In this environment, the evidence required from applicants depends on how the
planner organizes priority. We study two broad classes of allocation rules. A
\emph{chain priority system} places applicants in a single ordered sequence of
priority classes; this includes highest-priority-category rules,
lexicographic rankings, additive scores, and other scalar rankings. A
\emph{reserve system} instead preserves separate eligibility labels and
assigns label-specific capacities. These classes do not exhaust all allocation
rules, but they generate distinct certification requirements.

Throughout this section, \emph{safe} means evidence-safe: a priority claim is
granted only when every type compatible with the submitted content satisfies
that claim.

\subsection{Chain Priority Systems}

We begin with chain priority systems, which place all applicant types on a
common ordered priority scale. Familiar examples include
highest-priority-criterion rules, lexicographic rules, additive scoring rules,
and other rules that produce a single ranking of applicant types. Although
these rules aggregate characteristics differently, each induces the same
reduced-form object: an ordered collection of priority classes.\footnote{The example allocation rules are defined formally in Appendix~\ref{app:chain-examples}}

\begin{definition}[Chain priority system]
A \emph{chain priority system} is generated by a target ranking represented by
an ordered partition
\[
  R=(R_1,\ldots,R_L)
\]
of $\T$, where $R_1$ is the highest-priority class and types within each class are tied. Let $r_R(\theta)=\ell$ if $\theta\in R_\ell$, and write
\[
  \theta'\succeq_R\theta
  \quad\Longleftrightarrow\quad
  r_R(\theta')\le r_R(\theta).
\]
\end{definition}

$\theta'\succeq_R\theta$ means that $\theta'$ has at least as much
priority as $\theta$. At every applicant profile, the system serves applicants in higher-priority classes before applicants in lower-priority classes. If capacity is exhausted within a class, applicants in that class are rationed equally.

The subsequent analysis takes $R$ as primitive. Once types have been assigned
to its ordered classes, the allocation stage uses only their positions on the
chain, not the underlying criteria used to construct the ranking.

\paragraph{The full-information chain allocation.}
For a type profile $\boldsymbol\theta = (\theta_1,\ldots,\theta_n)\in\T^n,$
let $N_\ell(\boldsymbol\theta) = \bigl|\{i:r_R(\theta_i)=\ell\}\bigr|$ denote the number of applicants in class $R_\ell$, and let
\[
  H_\ell(\boldsymbol\theta)
  =
  \sum_{k=1}^{\ell}N_k(\boldsymbol\theta),
  \qquad
  H_0(\boldsymbol\theta)=0.
\]
The cutoff class is $ c_q(\boldsymbol\theta) = \min\{\ell:H_\ell(\boldsymbol\theta)\ge q\}.$
The full-information chain allocation assigns applicant $i$ probability $x_i^{R,q}(\boldsymbol\theta)$ as follows:
\[
  x_i^{R,q}(\boldsymbol\theta)
  =
  \begin{cases}
    1,
      & r_R(\theta_i)<c_q(\boldsymbol\theta),\\[5pt]
    \displaystyle
    \frac{
      q-H_{c_q(\boldsymbol\theta)-1}(\boldsymbol\theta)
    }{
      N_{c_q(\boldsymbol\theta)}(\boldsymbol\theta)
    },
      & r_R(\theta_i)=c_q(\boldsymbol\theta),\\[12pt]
    0,
      & r_R(\theta_i)>c_q(\boldsymbol\theta).
  \end{cases}
\]
Thus, classes strictly above the cutoff are served, classes strictly below it
are not served, and only the cutoff class may be rationed. Once applicants
have been placed in the ordered classes, the allocation stage uses only their
positions on the chain; the underlying criteria are no longer used separately.

\paragraph{Dominant-strategy implementation under hard evidence.}

Applicants care only about their probability of receiving a unit, and
submitting evidence is costless. Let
\[
  X_q
  =
  \left\{
    x\in[0,1]^n:
    \sum_{i=1}^n x_i\le q
  \right\}
\]
denote the set of feasible allocation-probability vectors.

A \emph{content-based allocation mechanism} is a mapping $g:\M^n\to X_q.$
A feasible strategy for applicant $i$ is another mapping $s_i:\T\to\M$
satisfying $s_i(\theta)\in M(\theta)$ for every $\theta\in\T.$ A feasible strategy $s_i$ is weakly dominant under $g$ if, for every type
$\theta_i$, every alternative feasible content
$m_i\in M(\theta_i)$, and every profile of contents submitted by the other
applicants $m_{-i}\in\M^{n-1}$,
\[
  g_i\bigl(s_i(\theta_i),m_{-i}\bigr)
  \ge
  g_i(m_i,m_{-i}).
\]
Dominance is evaluated subject to the hard-evidence constraint: type
$\theta_i$ may deviate to any content in $M(\theta_i)$, but not to a content
outside that set.

The full-information allocation rule $x^{R,q}$ is
\emph{implementable in dominant strategies} if there exist a content-based
allocation mechanism $g$ and a feasible weakly dominant strategy $s_i$ for
each applicant $i$ such that $g_i \bigl(s_1(\theta_1),\ldots,s_n(\theta_n) \bigr) = x^{R,q}_i(\boldsymbol\theta)$ for every $i$ and $\boldsymbol\theta.$

\paragraph{Safe readings of evidence.}

We now connect the multi-applicant allocation problem to the one-person
certification problem. With partial evidence, the planner observes only submitted contents. Because the same content may be feasible for several types, the planner must interpret contents without assigning them more priority than they justify. Let us first define the upper contour set in this setting:
\begin{definition}[Upper Contour]\label{def:uppercon}
    For each class $\ell$, the \textbf{upper contour} of $\ell$ is
    \[
      U_\ell^R = \bigcup_{k\le\ell}R_k = \{\theta\in\T:r_R(\theta)\le\ell\}.
    \]    
\end{definition}

A \emph{reading} is a mapping $\kappa:\M\to\{1,\ldots,L\}$. Reading content $m$ as class $\ell$ amounts to treating it as evidence that
the submitter belongs to $U_\ell^R$. Since every type in $M^{-1}(m)$ can
submit $m$, such a reading is justified only if $M^{-1}(m)\subseteq U_\ell^R.$

Next, we define safe reading and implementability:
\begin{definition}
    A reading $\kappa$ is
    \begin{enumerate}[label=(\alph*)]
        \item \textbf{safe} for $R$ if $\kappa(m)\ge r_R(\theta)$ 
    for every $m \in \M$ and every $\theta \in M^{-1}(m)$. Equivalently, $M^{-1}(m)\subseteq U_{\kappa(m)}^R$ for every $m \in \M$.
        \item \textbf{implementable} for $R$ if every type can attain exactly its target class. That is,
        \[
          r_R(\theta) = \min_{m\in M(\theta)}\kappa(m) \quad\text{for every }\theta\in\T.
        \]
    \end{enumerate}
    We say that $R$ is \textbf{safely implementable} if there exists a reading that is safe and implements $R$.
\end{definition}

Safety prevents a content from being read more favorably than the
lowest-priority type that could have submitted it; implementability guarantees every type can attain exactly its target, $\min_{m\in M(\theta)}\kappa(m)$, which is $\theta$'s most favorable class available through its feasible contents. 

Among all safe readings, the canonical interpretation of content $m$ is its
\emph{skeptical class} $\bar r_R(m) := \max_{\tau\in M^{-1}(m)}r_R(\tau).$ It assigns the content the lowest priority held by any compatible type.

\begin{theorem}
\label{thm:chain-burden}
Fix a target ranking $R$. The following are equivalent:
\begin{enumerate}[label=(\roman*)]
  \item $R$ is safely implementable by a reading;

  \item every proper upper contour
  $U_\ell^R$, $\ell=1,\ldots,L-1$, is content-certifiable;

  \item every type $\theta\in\T$ has a feasible content $m\in M(\theta)$
  satisfying
  \[
    M^{-1}(m)\subseteq U_{r_R(\theta)}^R.
  \]
  Equivalently,
  \[
    r_R(\theta)
    =
    \min_{m\in M(\theta)}
    \max_{\tau\in M^{-1}(m)}r_R(\tau)
    \qquad\text{for every }\theta\in\T.
  \]
\end{enumerate}
Whenever these conditions hold, the skeptical reading $\bar r_R$ safely
implements $R$. These conditions are also equivalent to dominant-strategy
implementation of the full-information allocation rule
$x^{R,q}$ by a content-based allocation mechanism.
\end{theorem}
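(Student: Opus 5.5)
The plan is to prove the cycle (i)$\Rightarrow$(iii)$\Rightarrow$(ii)$\Rightarrow$(iii)$\Rightarrow$(i), with the skeptical reading $\bar r_R$ serving as the witness in the last step. After that we add the dominant-strategy equivalence.

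\emph{(i)$\Rightarrow$(iii).} Let $\kappa$ be safe and implementing. For each $\theta$ pick $m\in M(\theta)$ attaining $\min_{m\in M(\theta)}\kappa(m)=r_R(\theta)$. Safety gives $M^{-1}(m)\subseteq U^R_{\kappa(m)}=U^R_{r_R(\theta)}$.

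\emph{(iii)$\Leftrightarrow$(ii).} This goes through Lemma~\ref{lem:cert}: $U^R_\ell$ is content-certifiable iff every $\theta\in U^R_\ell$ has some $m\in M(\theta)$ with $M^{-1}(m)\subseteq U^R_\ell$.
\begin{itemize}
\item Given (iii), take $\theta\in U^R_\ell$. Since $r_R(\theta)\le\ell$, the content from (iii) satisfies $M^{-1}(m)\subseteq U^R_{r_R(\theta)}\subseteq U^R_\ell$.
\item Conversely, given (ii), apply the criterion with $\ell=r_R(\theta)$ when $r_R(\theta)<L$. When $r_R(\theta)=L$ we have $U^R_L=\T$, so any feasible content works.
\end{itemize}

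\emph{Minimax form.} The equivalent displayed form of (iii) follows because $\theta\in M^{-1}(m)$ always gives $\bar r_R(m)\ge r_R(\theta)$. Hence the min–max is $\ge r_R(\theta)$, with equality iff some feasible $m$ has $M^{-1}(m)\subseteq U^R_{r_R(\theta)}$.

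\emph{(iii)$\Rightarrow$(i).} The reading $\bar r_R$ is safe by construction, and by the minimax form it implements $R$.

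\emph{Sufficiency for dominant strategies.} Define $g(m_1,\dots,m_n):=x^{R,q}$ evaluated at the class profile $(\bar r_R(m_1),\dots,\bar r_R(m_n))$. Since $x^{R,q}$ depends on types only through their ranks, this is well defined. Let $s_i(\theta)$ be a content minimizing $\bar r_R$ over $M(\theta)$. Under (iii), truthful play yields class $r_R(\theta_i)$ for each applicant, so $g$ reproduces $x^{R,q}$. Any feasible deviation yields a class weakly worse, i.e.\ a number $\ge r_R(\theta_i)$.

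The step I expect to need care is the own-rank monotonicity lemma: holding the other applicants' classes fixed, $x_i^{R,q}$ is weakly decreasing in $i$'s class number. I would check it directly from the cutoff formula. Moving $i$ from class $\ell$ to a class $\ell'>\ell$ weakly lowers $H_k$ for $\ell\le k<\ell'$, so the cutoff can only move weakly down the chain (weakly larger index). I would then compare $i$'s share case by case: $i$ is above, at, or below the old and new cutoffs. Among these, the case where $i$ is rationed at the cutoff both before and after needs the explicit fraction $(q-H_{c-1})/N_c$.

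\emph{Necessity for dominant strategies.} This direction is by contradiction, using scarcity. Suppose (iii) fails, and let $g,s$ implement $x^{R,q}$ in dominant strategies. Then some $\theta$ with $\ell=r_R(\theta)<L$ has every content in $M(\theta)$ shared with a strictly lower-priority type. In particular, $m=s_i(\theta)$ has some $\tau\in M^{-1}(m)$ with $r_R(\tau)>\ell$. Build the following profile, which is possible since $n\ge q+1$:
\begin{itemize}
\item $q-1$ applicants other than $i$ have type $\theta$;
\item the remaining $n-q\ge1$ other applicants have type $\tau$.
\end{itemize}
If $i$ has type $\theta$, class $\ell$ contains exactly $q$ applicants and nothing ranks above it, so $x_i^{R,q}=1$. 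If $i$ has type $\tau$, then $i$ sits strictly below the cutoff or is rationed among at least $2$ applicants sharing at most one unit, so $x_i^{R,q}\le 1/2$. But type $\tau$ can feasibly submit $m$ against the same $m_{-i}=s_{-i}(\boldsymbol\theta_{-i})$. That deviation gives $g_i=x_i^{R,q}(\theta,\boldsymbol\theta_{-i})=1$, which contradicts weak dominance of $s_i(\tau)$.
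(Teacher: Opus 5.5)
Your proposal is correct and follows essentially the paper's route. It uses the same (i)$\Rightarrow$(iii)$\Rightarrow$(i) argument via the skeptical reading, (ii)$\Leftrightarrow$(iii) via Lemma~\ref{lem:cert}, the skeptical-reading cutoff mechanism for sufficiency, and a one-unit-left scarcity profile for necessity. The differences are minor: the paper obtains (ii) directly by exhibiting the test $A_\ell=\{m:g_i(m,m_{-i}^\ell)=1\}$ at a fixed profile with $q-1$ types from $R_1$ and $n-q$ from $R_{\ell+1}$, whereas you argue by contradiction to (iii) from the violating type's own prescribed content, and you also propose to check the own-class monotonicity of $x^{R,q}$ that the paper simply asserts.
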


Theorem~\ref{thm:chain-burden} provides the formal bridge from one-person
certification to multi-person allocation. Against a profile at which one unit
remains, the allocation mechanism must distinguish types above a given
priority boundary from types below it. Any mechanism that reproduces the
full-information allocation therefore induces a certificate test for the
corresponding upper contour.

The result also identifies how much information a chain requires. Exact type
revelation is unnecessary. A type in $R_\ell$ needs only evidence that excludes
types below $R_\ell$; it may remain pooled with any type in a higher-priority
class. Thus, the direction of residual ambiguity matters more than the amount
of ambiguity. Upward pooling is harmless, whereas pooling across a lower
priority boundary can alter who receives a scarce unit.

If an upper contour is not content-certifiable, the failure has an allocation
consequence. On some applicant profiles, an intended claimant cannot be
guaranteed the marginal unit without making the same guarantee available to a
compatible lower-priority type. The planner must then either admit an
unsupported claimant or deny an intended claimant her target priority. The
dominant-strategy conclusion shows that allowing a more complicated mechanism
to condition on the entire profile of submitted contents cannot eliminate this
conflict.

\subsection{Reserve Systems}

Unlike a chain system, which places every applicant on a common priority
scale, a reserve system preserves separate eligibility labels and assigns
capacity to them. An applicant may qualify for several reserves, and
applicants with different labels need not be globally ranked against one
another.

Let $\Lcal$ be a finite set of active reserve labels. Each reserve
$r\in\Lcal$ has an integer capacity $q_r\geq 1$, and $q_0\geq 0$ denotes
open capacity, with
\[
  q_0+\sum_{r\in\Lcal}q_r=q.
\]
A target eligibility map $\rho:\T\to 2^\Lcal$ assigns each type its eligible reserve labels. For each $r\in\Lcal$, let $ D_r:=\{\theta\in\T:r\in\rho(\theta)\}$ denote the corresponding eligibility class.

\begin{definition}[Reserve system]
A \emph{reserve system} consists of the label set $\Lcal$, the capacity
vector $\mathbf q=(q_0,(q_r)_{r\in\Lcal}),$ the target eligibility map $\rho$, and a downstream reserve rule
\[
  \Phi^{\mathbf q}:(2^\Lcal)^n\to X_q.
\]
Moreover, $\Phi^q$ is \textbf{label-monotone} if, for every $i$, every profile
$S_{-i}\in(2^\Lcal)^{n-1}$ of the other applicants' certified label sets,
and every $S_i,S_i'\in2^\Lcal$ satisfying $S_i\subseteq S_i'$,
\[
  \Phi_i^{\mathbf q}(S_i,S_{-i})
  \leq
  \Phi_i^{\mathbf q}(S_i',S_{-i}).
\]
\end{definition}

At every profile of certified label sets, $\Phi^{\mathbf q}$ must be induced
by a feasible lottery over assignments: each applicant receives at most one
unit, at most $q_r$ units are assigned through reserve $r$ and only to
applicants certified for $r$, and at most $q_0$ open units are assigned.
Open units may be assigned to any applicant.

The rule specifies how applicants are rationed when demand for a reserve exceeds its capacity and how conflicts involving applicants with several labels are resolved. Throughout the analysis, we focus on label-monotone downstream reserve rules. Thus, holding the other applicants' certified labels fixed, adding labels to an applicant's certified set cannot reduce her allocation probability. Under full information, the resulting allocation rule is $x^{\rho,\mathbf q}(\boldsymbol\theta) := \Phi^{\mathbf q} \bigl(\rho(\theta_1),\ldots,\rho(\theta_n)\bigr).$

\paragraph{Label-specific applications.}

Applicants may submit evidence separately for different reserves. Let $\bot$
denote no application. A feasible application for type $\theta$ is a vector
\[
  a=(a_r)_{r\in\Lcal}
  \in(\M\cup\{\bot\})^\Lcal
\]
such that
\[
  a_r\in M(\theta)\cup\{\bot\}
  \qquad\text{for every }r\in\Lcal.
\]
Thus, an applicant may use different contents for different reserve labels.
The same content may also be submitted for more than one label.

A \emph{reserve reading} is a mapping $\eta:\M\to2^\Lcal,$ where $r\in\eta(m)$ means that content $m$ is accepted as evidence of
eligibility for reserve $r$. Set $\eta(\bot)=\emptyset$. Under application
$a$, label $r$ is certified precisely when $r\in\eta(a_r).$ Similar to the chain priority systems, we define safe reading and implementability for reserve systems.
\begin{definition}
    A reading $\eta$ is
     \begin{enumerate}[label=(\alph*)]
        \item \textbf{safe} for $\rho$ if it assigns a label only when every type capable of submitting the content holds that label: 
        \[
          \eta(m) \subseteq \bigcap_{\theta\in M^{-1}(m)}\rho(\theta)
          \qquad\text{for every }m\in\M.
        \]
        Equivalently,
        \[
          r\in\eta(m) \quad\Longrightarrow\quad
          M^{-1}(m)\subseteq D_r.
        \]
        \item \textbf{implementable} for $\rho$ if
        \[
          \bigcup_{m\in M(\theta)}\eta(m) =
          \rho(\theta) \qquad\text{for every }\theta\in\T.
        \]
    \end{enumerate}
    We say that $\rho$ is \emph{safely implementable} if there exists a safe reserve reading that implements $\rho$.
\end{definition}

Safety guarantees that the content is read in a way such that it is contained in the intersection of $\rho(\theta)$; implementation requires every target label to be attainable. Given a reserve reading, the downstream rule is applied to the certified label sets generated by applicants' application vectors. A reserve strategy specifies a feasible application vector for each type, and weak dominance is defined as in the chain system.

\paragraph{Skeptical reserve reading.}

The evidence technology provides a canonical safe interpretation of every
content. The \emph{skeptical reserve reading} is defined by $\bar\eta_\rho(m) := \bigcap_{\theta\in M^{-1}(m)}\rho(\theta).$
Thus, $\bar\eta_\rho(m)$ contains exactly the labels held by every type
compatible with $m$. A reserve reading $\eta$ is safe if and only if
\[
  \eta(m)\subseteq\bar\eta_\rho(m)
  \qquad\text{for every }m\in\M.
\]
Hence, $\bar\eta_\rho$ is the pointwise largest safe reserve reading.

For any eligibility map $\rho$, define the \emph{verifiable-label operator}
$\mathcal V$ by
\[
  (\mathcal V\rho)(\theta)
  :=
  \bigcup_{m\in M(\theta)}\bar\eta_\rho(m).
\]
The set $(\mathcal V\rho)(\theta)$ consists of all labels that type $\theta$
can safely establish under label-specific applications. Since
$\theta\in M^{-1}(m)$ for every $m\in M(\theta)$,
\[
  (\mathcal V\rho)(\theta)\subseteq\rho(\theta).
\]
Thus, the fixed-point condition $\mathcal V\rho=\rho$ requires every target
label, and only a target label, to be safely attainable.

\begin{theorem}
\label{thm:reserve-burden}
Fix a target eligibility map $\rho:\T\to2^\Lcal$. The following are equivalent:
\begin{enumerate}[label=(\roman*)]
  \item $\rho$ is safely implementable;

  \item every eligibility class $D_r$, $r\in\Lcal$, is
  content-certifiable;

  \item for every $r\in\Lcal$ and every $\theta\in D_r$, there exists
  $m\in M(\theta)$ such that
  \[
    M^{-1}(m)\subseteq D_r.
  \]
  Equivalently,
  \[
    \mathcal V\rho=\rho.
  \]
\end{enumerate}
Whenever these conditions hold, the skeptical reserve reading
$\bar\eta_\rho$ implements $\rho$. Moreover, under every feasible,
label-monotone downstream reserve rule, applications that certify all labels
in $\rho(\theta)$ are weakly dominant and implement the full-information
allocation $x^{\rho,\mathbf q}$.

If the institution instead requires one common content to establish an
applicant's entire reserve-label set, then a safe reserve reading implements
$\rho$ if and only if
\[
  \text{for every }\theta\in\T,\qquad
  \exists m\in M(\theta)
  \text{ such that }
  \bar\eta_\rho(m)=\rho(\theta).
\]
Whenever this condition holds, the skeptical reserve reading implements
$\rho$ under the common-content requirement.
\end{theorem}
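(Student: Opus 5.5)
The plan is to prove the eligibility equivalences label by label using Lemma~\ref{lem:cert}, and then handle the allocation and common-content claims separately.

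\emph{(ii)$\Leftrightarrow$(iii).} For each $r$, $D_r$ is content-certifiable exactly when the largest sound test $A^*_{D_r}$ is complete for $D_r$. Completeness says that every $\theta\in D_r$ has some $m\in M(\theta)$ with $M^{-1}(m)\subseteq D_r$, which is the first form of (iii). For the fixed-point form, note that $r\in\bar\eta_\rho(m)$ if and only if $M^{-1}(m)\subseteq D_r$. Hence $r\in(\mathcal V\rho)(\theta)$ if and only if some $m\in M(\theta)$ has $M^{-1}(m)\subseteq D_r$. Since the inclusion $\mathcal V\rho\subseteq\rho$ has already been shown, the condition $\mathcal V\rho=\rho$ is exactly the statement that every $\theta\in D_r$ has such an $m$, for every $r$.

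\emph{(i)$\Leftrightarrow$(iii).} Let $\eta$ be safe. Then $\eta\subseteq\bar\eta_\rho$ pointwise, so
\[
  \bigcup_{m\in M(\theta)}\eta(m)\subseteq(\mathcal V\rho)(\theta)\subseteq\rho(\theta).
\]
If $\eta$ implements $\rho$, both inclusions are equalities, which gives (iii). Conversely, if (iii) holds, then $\bar\eta_\rho$ is safe and attains $\mathcal V\rho=\rho$, so it implements $\rho$. This also proves the claim about the skeptical reserve reading.

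\emph{Dominance.} Under $\bar\eta_\rho$, fix a type $\theta$. For each $r\in\rho(\theta)$, let $a_r$ be a content given by (iii), and set $a_r=\bot$ for all other labels. Any feasible application certifies only labels $r$ with $r\in\bar\eta_\rho(a_r)$, and these lie in $\rho(\theta)$ because $\theta\in M^{-1}(a_r)$. So any deviation certifies a subset of $\rho(\theta)$. Label-monotonicity of $\Phi^{\mathbf q}$, holding $S_{-i}$ fixed at arbitrary certified sets, then shows that the prescribed application is weakly dominant. When every applicant uses it, the certified profile is $(\rho(\theta_i))_i$, so the outcome is $x^{\rho,\mathbf q}$.

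\emph{Common content.} Here type $\theta$ obtains exactly $\eta(m)$ for some single content $m\in M(\theta)$. Implementation should mean that the best label set attainable equals $\rho(\theta)$; I read this as requiring some $m\in M(\theta)$ with $\eta(m)=\rho(\theta)$. Safety gives $\eta(m)\subseteq\bar\eta_\rho(m)\subseteq\rho(\theta)$. Therefore $\eta(m)=\rho(\theta)$ forces $\bar\eta_\rho(m)=\rho(\theta)$, which proves necessity. Conversely, if such an $m$ exists for every $\theta$, then $\bar\eta_\rho$ delivers $\rho(\theta)$ using that $m$, and no feasible content overshoots, since $\bar\eta_\rho(m')\subseteq\rho(\theta)$ for all $m'\in M(\theta)$.

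The main obstacle is pinning down the implementation notion in the common-content protocol, since the definition given covers only label-specific applications. I would use the reading above, under which the argument is routine.
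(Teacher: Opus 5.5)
Your proof is correct and follows essentially the same route as the paper: Lemma~\ref{lem:cert} for (ii)$\Leftrightarrow$(iii), the skeptical reading for sufficiency, safety plus label-monotonicity for dominance, and the same sandwich $\eta(m)\subseteq\bar\eta_\rho(m)\subseteq\rho(\theta)$ for the common-content case. The one small difference is that you get (i)$\Rightarrow$(iii) from the pointwise bound $\eta\subseteq\bar\eta_\rho$, whereas the paper proves (i)$\Rightarrow$(ii) by reading off the certifying test $A_r=\{m\in\M:r\in\eta(m)\}$ from the implementing reading; both are one-line arguments.
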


Theorem~\ref{thm:reserve-burden} identifies the proof obligation generated by
each reserve. Under label-specific applications, access to reserve $r$
requires a content that excludes every type outside $D_r$. The content need
not reveal the applicant's complete type, distinguish among types that are all
eligible for $r$, or establish eligibility for any other reserve. It must
establish only the capacity-relevant claim $\theta\in D_r$.

When reserve $r$ binds, this condition determines who may compete for its
scarce capacity. Accepting a content also available to an ineligible type
admits an unsupported claimant, whereas rejecting it may exclude an intended
claimant. Condition~(iii) eliminates this conflict by ensuring that every
eligible type has some qualifying content and that no ineligible type can use
that content to obtain the label.

The theorem characterizes implementation of the eligibility map rather than
reproduction of one particular allocation outcome. A label may be redundant
under a particular capacity vector or on a profile at which its reserve is
undersubscribed. Implementing the eligibility map is the rule-independent
requirement that guarantees correct implementation across profiles and for
every label-monotone reserve procedure using $\rho$.

\paragraph{Comparing Theorem~\ref{thm:chain-burden} and
Theorem~\ref{thm:reserve-burden}.}

The two theorems connect one-person certification to the allocation of scarce
units, but they generate different proof obligations. A chain places each
applicant on one ordered scale. Since a type in class $R_\ell$ may safely
remain pooled with better types, the relevant classes are the upper contours $U_\ell^R.$

A reserve system instead uses each label as a separate gate to a portion of
capacity. The relevant classes are therefore the label-specific eligibility
sets $D_r.$

The strategic choices differ accordingly. Under a chain, an applicant chooses
one feasible content yielding her most favorable safe class. Under
label-specific reserves, she may choose a different content for each label and
applies for every label she can establish. A content crossing a chain boundary
may allow a lower-priority type to compete for the marginal unit; a content
crossing the boundary of $D_r$ may allow an ineligible type to enter reserve
$r$. The same evidence technology can therefore support one allocation system
while failing the certification burden imposed by the other.

Under full-report normality, safe implementability in both architectures is
equivalent to monotonicity with respect to evidence mimicry. Recall that
$\theta\preceqM\tau$ means that $\tau$ can submit every content available to
$\theta$. Hence,
\[
  \begin{aligned}
  R\text{ is safely implementable}
  &\quad\Longleftrightarrow\quad
  \forall\theta,\tau\in\T,\quad
  \theta\preceqM\tau
  \ \Longrightarrow\
  \tau\succeq_R\theta,\\
  \rho\text{ is safely implementable}
  &\quad\Longleftrightarrow\quad
  \forall\theta,\tau\in\T,\quad
  \theta\preceqM\tau
  \ \Longrightarrow\
  \rho(\theta)\subseteq\rho(\tau).
  \end{aligned}
\]
In words, a type that can mimic another type cannot be assigned lower
priority in a chain or fewer eligibility labels in a reserve system. Under
full-report normality, this monotonicity condition is both necessary and
sufficient; without full reports, it is only necessary.

\subsection{Evidence-Safe Repairs}

Theorems~\ref{thm:chain-burden} and~\ref{thm:reserve-burden}
characterize when a target rule can be implemented using the available
evidence. When their conditions fail, the planner may enrich the evidence
technology, for example by adding a jointly informative content, or weaken the
target rule so that every retained priority claim is supported by the existing
technology. We study the second response.

The repairs considered here are \emph{conservative}. A chain repair may demote
a type but never place it above its target class. A reserve repair may delete a
target eligibility label but never add a label absent from the target rule.
The resulting notion of minimality is pointwise and order-theoretic. 

\paragraph{Chain repairs.}

Write the target chain as a class-index function $r^0:\T\to\{1,\ldots,L\},$
where lower indices mean higher priority. For any class-index function $r$,
define its \emph{skeptical closure} by
\[
  (\mathcal S r)(\theta)
  =
  \min_{m\in M(\theta)}
  \max_{\tau\in M^{-1}(m)}r(\tau).
\]
The inner maximum is the most favorable safe reading of content $m$, and the
outer minimum is the best such reading attainable by type $\theta$. Hence,
$(\mathcal S r)(\theta)$ is the best class that $\theta$ can safely establish
under ranking $r$.

Because $\theta\in M^{-1}(m)$ whenever $m\in M(\theta)$,
\[
  (\mathcal S r)(\theta)\ge r(\theta)
  \quad\text{for every }\theta\in\T.
\]
Thus, $\mathcal S$ can only demote types. By the skeptical-reading
characterization in Theorem~\ref{thm:chain-burden}, a class-index function
$r$ is safely implementable if and only if
\[
  \mathcal S r=r.
\]

A class-index function $r$ is a \emph{relaxation} of $r^0$ if $r(\theta)\ge r^0(\theta)$ for every $\theta \in \T$.
A relaxation may demote types, but it never grants a type a better absolute
class than the target rule assigns.
We do not require a repaired class-index function to be onto; empty classes
may be deleted without changing the induced ranking or allocation. Minimality
is evaluated on the original class-index scale, before deleting empty classes.
This criterion does not require preserving pairwise rankings or original ties
and does not measure allocation or welfare losses.

\paragraph{Reserve repairs.}

Recall from Lemma~\ref{lem:cert} that $\mathcal I(D)$ is the certifiable
interior of $D$: the largest content-certifiable subclass of $D$. In
particular,
\[
  D\text{ is content-certifiable}
  \quad\Longleftrightarrow\quad
  \mathcal I(D)=D.
\]

Let $\rho^0:\T\to2^\Lcal$ be a target reserve eligibility mapping, so $\rho^0(\theta)$ is the set of labels
for which the planner intends type $\theta$ to be eligible. The same
information can be read label by label. For each $r\in\Lcal$, $D_r^0 = \{\theta\in\T:r\in\rho^0(\theta)\}$
is the \emph{target eligibility class} of label $r$: the set of types the
planner intends to admit to reserve $r$. The mapping $\rho^0$ and the family
$(D_r^0)_{r\in\Lcal}$ describe the same object from opposite sides. The mapping lists the labels of a type; the class lists the types of a label. We repair on the class side, because content-certifiability is a property of type classes, and then read the result back as a mapping.

Under the label-specific application convention, a reserve mapping $\rho$ is
\emph{evidence-safe} if it is implementable by a safe reserve reading, which by
Theorem~\ref{thm:reserve-burden} holds exactly when every class
$\{\theta\in\T:r\in\rho(\theta)\}$ is content-certifiable. It is a
\emph{contraction} of $\rho^0$ if
\[
  \rho(\theta)\subseteq\rho^0(\theta)
  \quad\text{for every }\theta\in\T,
\]
that is, if it removes labels but never adds them. The following theorem gives the canonical conservative repair for each system.

Define the evidence-safe repairs of the target chain and reserve system by
\[
  r^E:=\mathcal S r^0,
  \qquad
  D_r^E:=\mathcal I(D_r^0)
  \quad\text{for each }r\in\Lcal, \quad \textrm{and} \quad
  \rho^E:=\mathcal V\rho^0.
\]

\begin{theorem}
\label{thm:repairs}
Fix a finite evidence technology $M(\cdot)$.
\begin{enumerate}[label=(\arabic*)]
  \item \emph{Chain priority systems.}
  The skeptical operator $\mathcal S$ is a closure operator: it is monotone,
  extensive, and idempotent. Consequently,
  \[
    r^E=\mathcal S r^0
  \]
  is the pointwise smallest safely implementable relaxation of the target
  class-index function $r^0$.

  \item \emph{Reserve systems.}
  The certifiable-interior operator $\mathcal I$ is monotone, contractive, and idempotent. Repair each label separately, $D_r^E=\mathcal I(D_r^0)$ for every  $r \in \Lcal$, and define $\rho^E(\theta)=\{r\in\Lcal:\theta\in D_r^E\}.$
  
  Then $D_r^E$ is the largest content-certifiable subclass of $D_r^0$ for every $r$, and $\rho^E$ is the pointwise largest evidence-safe contraction of $\rho^0$.
\end{enumerate}
Both repaired rules are implemented by their skeptical readings. Empty repaired
chain classes may be deleted without changing the induced ranking or
allocation.
\end{theorem}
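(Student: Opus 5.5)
The proof treats the two parts separately. Each part reduces to checking the order-theoretic properties of the operator and then invoking the fixed-point characterizations already established: Theorem~\ref{thm:chain-burden} for chains and Lemma~\ref{lem:cert} with Theorem~\ref{thm:reserve-burden} for reserves.

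\emph{Chain part.} Monotonicity is immediate. If $r\le r'$ pointwise, then for every $m$ the inner maximum satisfies $\max_{\tau\in M^{-1}(m)}r(\tau)\le\max_{\tau\in M^{-1}(m)}r'(\tau)$, and taking the minimum over $m\in M(\theta)$ preserves the inequality. Extensivity, $\mathcal S r\ge r$, was already observed in the text, since $\theta\in M^{-1}(m)$ for every $m\in M(\theta)$.

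The main step is idempotency. Extensivity and monotonicity give $\mathcal S\mathcal S r\ge\mathcal S r$, so only the reverse inequality remains. Fix $\theta$ and choose $m^*\in M(\theta)$ attaining $(\mathcal S r)(\theta)=\max_{\tau\in M^{-1}(m^*)}r(\tau)=:k$. For every $\tau\in M^{-1}(m^*)$ we have $m^*\in M(\tau)$, hence
\[
  (\mathcal S r)(\tau)\le\max_{\sigma\in M^{-1}(m^*)}r(\sigma)=k .
\]
Therefore
\[
  (\mathcal S\mathcal S r)(\theta)\le\max_{\tau\in M^{-1}(m^*)}(\mathcal S r)(\tau)\le k .
\]
The key observation is that the content certifying $\theta$'s closure also certifies every type pooled with $\theta$ at no worse a level.

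Given these properties, $r^E=\mathcal S r^0$ is a relaxation by extensivity. It is safely implementable because $\mathcal S r^E=r^E$ and Theorem~\ref{thm:chain-burden}(iii) holds; applied to a non-onto class-index function, that theorem needs only the remark that empty classes can be deleted. Minimality follows from monotonicity: if $r\ge r^0$ is safely implementable, then $r=\mathcal S r\ge\mathcal S r^0=r^E$. The skeptical reading $\bar r_{r^E}$ then implements $r^E$ by Theorem~\ref{thm:chain-burden}. Finally, deleting empty classes changes neither $\succeq$ nor the cutoff computation in $x^{R,q}$.

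\emph{Reserve part.} The properties of $\mathcal I$ follow from its definition as a union of certifier sets.
\begin{itemize}[label={}]
  \item \emph{Monotone:} if $D\subseteq D'$, then $A^*_D\subseteq A^*_{D'}$, so $\mathcal I(D)\subseteq\mathcal I(D')$.
  \item \emph{Contractive:} each $M^{-1}(m)$ with $m\in A^*_D$ lies in $D$, so $\mathcal I(D)\subseteq D$.
  \item \emph{Idempotent:} every $m\in A^*_D$ satisfies $M^{-1}(m)\subseteq\mathcal I(D)$, hence $A^*_D\subseteq A^*_{\mathcal I(D)}$. This gives $\mathcal I(\mathcal I(D))\supseteq\mathcal I(D)$, and contraction gives the reverse inclusion.
\end{itemize}
By Lemma~\ref{lem:cert}, $D_r^E$ is the largest certifiable subclass of $D_r^0$.

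Two further checks remain. First, $\rho^E$, defined label by label from the $D_r^E$, coincides with $\mathcal V\rho^0$. Indeed, $r\in(\mathcal V\rho^0)(\theta)$ if and only if there is $m\in M(\theta)$ with $r\in\bar\eta_{\rho^0}(m)$, which holds if and only if $M^{-1}(m)\subseteq D^0_r$. That is exactly the condition $\theta\in\mathcal I(D_r^0)$.

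Second, $\rho^E$ is the pointwise largest evidence-safe contraction. By Theorem~\ref{thm:reserve-burden}, $\rho^E$ is evidence-safe because each $D^E_r$ is certifiable. If $\rho$ is any evidence-safe contraction of $\rho^0$, its classes $D_r$ are certifiable subclasses of $D^0_r$, so $D_r\subseteq D_r^E$ by maximality, i.e.\ $\rho\subseteq\rho^E$ pointwise. Implementation by $\bar\eta_{\rho^E}$ then follows from Theorem~\ref{thm:reserve-burden}.

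The only step that needs genuine care is the idempotency argument for $\mathcal S$. Everything else is unwinding definitions and invoking the earlier theorems.
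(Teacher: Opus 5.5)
Your proposal is correct and follows the paper's proof essentially step for step. It uses the same $m^*$-based argument for idempotence of $\mathcal S$, minimality by applying monotonicity to a fixed point $\widetilde r=\mathcal S\widetilde r$, and, for reserves, Lemma~\ref{lem:cert}, the identity $\rho^E=\mathcal V\rho^0$, and label-by-label maximality. The only differences are cosmetic. You prove idempotence of $\mathcal I$ directly via $A^*_D\subseteq A^*_{\mathcal I(D)}$, whereas the paper reads it off Lemma~\ref{lem:cert}'s fixed-point characterization. You also add an explicit remark on applying Theorem~\ref{thm:chain-burden} to non-onto class-index functions, which the paper leaves implicit.
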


The theorem gives closed-form, one-step repairs:
\[
  r^E(\theta)
  =
  \min_{m\in M(\theta)}
  \max_{\tau\in M^{-1}(m)}r^0(\tau),
  \qquad
  D_r^E
  =
  \bigcup_{m:\,M^{-1}(m)\subseteq D_r^0}M^{-1}(m).
\]
The chain repair makes the smallest pointwise demotions needed for safe
implementation, while the reserve repair retains, label by label, the largest
content-certifiable subset of the target eligibility class. Because
$\mathcal S$ and $\mathcal I$ are idempotent, the repaired objects are fixed
points: applying either repair again produces no further change.

The two repairs use the same certifiable-interior operation. Indeed, for every
cutoff $\ell\in\{1,\ldots,L\}$,
\[
  \{\theta\in\T:(\mathcal S r^0)(\theta)\le\ell\}
  =
  \mathcal I\bigl(\{\theta\in\T:r^0(\theta)\le\ell\}\bigr).
\]
Both sides consist of types possessing a feasible content whose certifier set
lies within the target upper contour. Thus, chains apply $\mathcal I$ to
upper contours, whereas reserves apply it to eligibility classes.

These fixed points are directly implementable by their skeptical readings.
Under the repaired chain, every type has a feasible content attaining its
repaired class; submitting such a content is weakly dominant under the chain
cutoff rule and implements the corresponding full-information allocation for
every interior capacity. Under the repaired reserve map $\rho^E$, the
skeptical reading $\bar\eta_{\rho^E}$ allows each type to establish exactly
the labels in $\rho^E(\theta)$. With label-specific applications, applying
for all such labels is weakly dominant under every feasible label-monotone
downstream reserve rule.

The repairs are independent of the realized applicant profile and of which
capacity ultimately binds. They may leave the realized allocation unchanged
when the affected chain boundary or reserve is slack. When it binds, however,
a demotion or label deletion can change who receives a unit or who is subject
to rationing. The repairs therefore remove unsupported priority claims while
preserving as much of the target rule as the evidence technology permits.

Also, note that a repair does not recover an infeasible target allocation. It replaces
the target rule with the pointwise closest conservative revision: unsupported
chain priorities are relaxed by demotion, while unsupported reserve labels are
deleted.

\begin{example}[Repair under uninformative evidence]
\label{ex:repair-binding-unit}
Let $\T=\{h,\ell\}$, and suppose both types can submit only the same null
content:
\[
  M^{-1}(m_0)=\{h,\ell\}.
\]
The evidence technology therefore cannot support any distinction between the
two types.

Suppose first that the target chain ranks $h$ above $\ell$. The skeptical
repair pools the two types into one priority class. With one unit and one
applicant of each type, the repaired chain rations the unit equally. This
implements the repaired ranking, although the original priority distinction
cannot be preserved.

Suppose instead that there is one reserve label $r$ with target eligibility
class $D_r^0=\{h\}$. Because the only content is also available to $\ell$,
\[
  D_r^E=\mathcal I(D_r^0)=\emptyset.
\]
The repaired eligibility map therefore certifies neither type for $r$ and is
implemented by accepting no content for that label. Under the present
feasibility convention, the capacity assigned to this reserve remains unused.

The example illustrates what a conservative repair does. It does not recover
an infeasible target. It removes only the unsupported claims: the chain repair
eliminates an unsupported ranking distinction by pooling, whereas the reserve
repair eliminates an unsupported eligibility label.
\end{example}

\section{Applications}

The preceding analysis treats evidence abstractly through the set of types
capable of submitting each content. We now specialize to \emph{category
evidence}, in which certificates make verifiable statements about the
categories constituting an applicant's type. This structure distinguishes
evidence of presence from evidence of absence and makes clear when several
claims can be established jointly.

\subsection{Signed-Bundle Evidence}

A signed bundle combines positive and negative category claims in a single
certificate.

\begin{definition}[Signed-bundle evidence]
A \emph{signed-bundle certificate} is a pair
\[
  (P,A),
  \qquad
  P,A\subseteq\C,
  \qquad
  P\cap A=\emptyset,
\]
where the categories in $P$ are certified present and those in $A$ are
certified absent.

A \emph{signed-bundle evidence technology} is a collection $G$ of such
certificates. The certificates feasible for type $\theta$ are
\[
  G(\theta)
  :=
  \{(P,A)\in G:
    P\subseteq\theta,\;
    A\cap\theta=\emptyset\}.
\]
We assume $G(\theta)\neq\emptyset$ for every $\theta\in\T$.
\end{definition}

For a certificate $m=(P,A)$, the corresponding certifier set is
\[
  M^{-1}(m)
  =
  \T(P,A)
  :=
  \{\tau\in\T:
    P\subseteq\tau,\;
    A\cap\tau=\emptyset\}.
\]
Thus, $\T(P,A)$ consists of exactly those types for which every presence and
absence claim contained in the certificate is true.

The general characterizations now specialize directly to signed bundles.

\begin{corollary}[Implementation under signed-bundle evidence]
\label{cor:signed}
Under a signed-bundle evidence technology $G$, the following statements hold.
\begin{enumerate}[label=(\roman*)]
  \item A target chain ranking $R$ is safely implementable if and only if,
  for every $\theta\in\T$, there exists $(P,A)\in G(\theta)$ such that
  \[
    \T(P,A)\subseteq U_{r_R(\theta)}^R.
  \]

  \item Under label-specific reserve applications, a target eligibility map
  $\rho:\T\to2^\Lcal$ is safely implementable if and only if, for every
  $r\in\Lcal$ and every $\theta\in D_r$, there exists
  $(P,A)\in G(\theta)$ such that
  \[
    \T(P,A)\subseteq D_r.
  \]
\end{enumerate}
\end{corollary}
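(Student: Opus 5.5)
The corollary is a direct specialization of Theorems~\ref{thm:chain-burden} and~\ref{thm:reserve-burden}. The plan is to instantiate the abstract evidence technology $M$ by the signed-bundle technology, setting $\M:=G$ and $M(\theta):=G(\theta)$, and then read off condition~(iii) of each theorem. Before doing so, I will check that this instantiation satisfies the paper's two feasibility conventions. Every type has a feasible content by the standing assumption $G(\theta)\neq\emptyset$. A certificate $(P,A)$ with $P\cap A=\emptyset$ is feasible for the type $\theta=P$, since $P\subseteq P$ and $A\cap P=\emptyset$. So no content has an empty certifier set, and no deletion is needed.

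The one substantive step is to identify the certifier sets. For $m=(P,A)\in G$, the definition of $G(\theta)$ gives
\[
  M^{-1}(m)=\{\tau\in\T:(P,A)\in G(\tau)\}=\{\tau:P\subseteq\tau,\ A\cap\tau=\emptyset\}=\T(P,A).
\]
This is immediate from the definitions, and I expect no real obstacle here. The only care required is to note that membership $(P,A)\in G$ is fixed, so only the two type-dependent conditions remain.

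For part~(i), I will apply Theorem~\ref{thm:chain-burden}, using the equivalence (i)$\Leftrightarrow$(iii). Condition~(iii) states that every $\theta$ has some $m\in M(\theta)$ with $M^{-1}(m)\subseteq U^R_{r_R(\theta)}$. Substituting $M(\theta)=G(\theta)$ and $M^{-1}(m)=\T(P,A)$ yields exactly the displayed condition.

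For part~(ii), I will apply Theorem~\ref{thm:reserve-burden}, using the equivalence (i)$\Leftrightarrow$(iii) under label-specific applications. The same substitution turns its condition~(iii) into the stated requirement that $\T(P,A)\subseteq D_r$ for some $(P,A)\in G(\theta)$, for each $r\in\Lcal$ and each $\theta\in D_r$.

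Both parts therefore follow verbatim, and no further argument is needed beyond this translation.
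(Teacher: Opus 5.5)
Your proposal is correct and matches the paper's proof. Both identify $M^{-1}(P,A)=\T(P,A)$ and then read off condition~(iii) of Theorems~\ref{thm:chain-burden} and~\ref{thm:reserve-burden}; your added check that each bundle $(P,A)$ is feasible for the type $P$, so that the feasibility conventions hold automatically, is a correct detail the paper leaves implicit.
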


By Theorems~\ref{thm:chain-burden} and~\ref{thm:reserve-burden}, these same
conditions yield dominant-strategy implementation of the corresponding
full-information allocations: for chains at every interior capacity, and for
reserves under every feasible label-monotone downstream rule.

The two architectures impose different requirements on the same certificate.
For a chain, the bundle need only exclude types below the applicant's target
priority boundary. For reserve $r$, it must exclude every type outside the
eligibility class $D_r$. Under label-specific applications, an applicant may
use different bundles to establish different reserve labels. Within either
architecture, however, a claim involving both presence and absence requires a
single feasible bundle that carries those statements jointly; separately
available certificates do not by themselves establish their conjunction.

\subsection{One-Sided and Exact Evidence}

Signed-bundle evidence contains three useful benchmark technologies.

\paragraph{Positive evidence.}
Positive evidence permits only bundles of the form $(P,\emptyset)$. Such a
bundle can be submitted by exactly the types containing every category in
$P$:
\[
  \T(P,\emptyset)
  =
  \{\theta\in\T:P\subseteq\theta\}.
\]
A positive evidence technology is \emph{rich} if it contains
$(P,\emptyset)$ for every $P\subseteq\C$.

\paragraph{Negative evidence.}
Negative evidence permits only bundles of the form $(\emptyset,A)$. Such a
bundle can be submitted by exactly the types containing none of the categories
in $A$:
\[
  \T(\emptyset,A)
  =
  \{\theta\in\T:A\cap\theta=\emptyset\}.
\]
A negative evidence technology is \emph{rich} if it contains
$(\emptyset,A)$ for every $A\subseteq\C$.

\paragraph{Exact-profile evidence.}
Exact-profile evidence contains, for every type $\theta$, the bundle
\[
  (\theta,\C\setminus\theta).
\]
Its certifier set is the singleton $\{\theta\}$, so the applicant can establish
her complete category profile.

The relevant restrictions on target rules are defined with respect to category
inclusion.

\begin{definition}[Inclusion monotonicity]
A chain ranking $R$ is \emph{inclusion-monotone} if
\[
  \theta\subseteq\theta'
  \quad\Longrightarrow\quad
  \theta'\succeq_R\theta,
\]
and \emph{inclusion-antitone} if
\[
  \theta\subseteq\theta'
  \quad\Longrightarrow\quad
  \theta\succeq_R\theta'.
\]

A reserve eligibility map $\rho$ is \emph{inclusion-monotone} if
\[
  \theta\subseteq\theta'
  \quad\Longrightarrow\quad
  \rho(\theta)\subseteq\rho(\theta'),
\]
and \emph{inclusion-antitone} if
\[
  \theta\subseteq\theta'
  \quad\Longrightarrow\quad
  \rho(\theta')\subseteq\rho(\theta).
\]
\end{definition}

\begin{proposition}[Rich one-sided and exact evidence]
\label{prop:onesided}
The following statements hold.
\begin{enumerate}[label=(\arabic*)]
  \item \emph{Chain priority systems.}
  \begin{enumerate}[label=(\roman*),leftmargin=*]
    \item Under rich positive evidence, $R$ is safely implementable if and
    only if it is inclusion-monotone.
    \item Under rich negative evidence, $R$ is safely implementable if and
    only if it is inclusion-antitone.
    \item Under exact-profile evidence, every target ranking is safely
    implementable.
  \end{enumerate}

  \item \emph{Reserve systems.} Under label-specific reserve applications,
  \begin{enumerate}[label=(\roman*),leftmargin=*]
    \item under rich positive evidence, safe implementability of $\rho$ is
    equivalent to inclusion monotonicity;
    \item under rich negative evidence, safe implementability of $\rho$ is
    equivalent to inclusion antitonicity;
    \item under exact-profile evidence, every target eligibility map is safely
    implementable.
  \end{enumerate}
\end{enumerate}
\end{proposition}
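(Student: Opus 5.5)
The plan is to reduce every case to Corollary~\ref{cor:signed} together with Proposition~\ref{prop:fullreport}. Each of the three benchmark technologies is full-report normal, so under each of them content-certifiability of a class coincides with upward closure under the mimicry preorder. Once that preorder is identified, each case is immediate from Theorems~\ref{thm:chain-burden} and~\ref{thm:reserve-burden}.

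First, I verify full-report normality. Under rich positive evidence, take $(P,\emptyset),(P',\emptyset)\in G(\theta)$. Then $(P\cup P',\emptyset)\in G$ by richness, it is feasible for $\theta$ because $P\cup P'\subseteq\theta$, and
\[
\T(P\cup P',\emptyset)=\T(P,\emptyset)\cap\T(P',\emptyset).
\]
The negative case is symmetric, using $(\emptyset,A\cup A')$. Under exact-profile evidence, the bundle $(\theta,\C\setminus\theta)$ is feasible only for $\theta$. Its certifier set $\{\theta\}$ is contained in every certifier set of a content feasible for $\theta$, so it serves as $m''$ in Definition~\ref{def:fullreport}.

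Second, I identify $\preceqM$ in each case.
\begin{enumerate}[label=(\alph*)]
  \item Under rich positive evidence, $\theta\preceqM\tau$ iff $\theta\subseteq\tau$. If $\theta\subseteq\tau$, every $P\subseteq\theta$ also satisfies $P\subseteq\tau$. Conversely, $(\theta,\emptyset)\in G(\theta)$, so $M(\theta)\subseteq M(\tau)$ forces $\theta\subseteq\tau$.
  \item Under rich negative evidence, $\theta\preceqM\tau$ iff $\tau\subseteq\theta$, using the content $(\emptyset,\C\setminus\theta)$ in the same way.
  \item Under exact-profile evidence, $\preceqM$ is at least as fine as equality, since $(\theta,\C\setminus\theta)$ is feasible only for $\theta$.
\end{enumerate}
One check is needed in the exact case: $G$ may contain other bundles, but the singleton bundle still gives every singleton class certifiability directly through Lemma~\ref{lem:cert}, so nothing more is required there.

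Third, I apply the characterizations.
\begin{itemize}
  \item \textbf{Chains.} By Theorem~\ref{thm:chain-burden}(ii), $R$ is safely implementable iff every upper contour $U_\ell^R$ is upward closed under $\preceqM$. By the restatement at the end of the comparison subsection, this is equivalent to: $\theta\preceqM\tau$ implies $\tau\succeq_R\theta$. Substituting the identified preorders gives inclusion-monotonicity under positive evidence, inclusion-antitonicity under negative evidence, and no restriction under exact evidence.
  \item \textbf{Reserves.} Theorem~\ref{thm:reserve-burden}(ii) requires each $D_r$ to be upward closed. That is equivalent to $\theta\preceqM\tau\Rightarrow\rho(\theta)\subseteq\rho(\tau)$, and it yields the three reserve statements in the same way.
\end{itemize}
Alternatively, the exact cases follow directly from Corollary~\ref{cor:signed}, taking $(P,A)=(\theta,\C\setminus\theta)$ with $\T(P,A)=\{\theta\}$.

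The main obstacle is the converse direction of the mimicry identification: showing that $M(\theta)\subseteq M(\tau)$ forces the inclusion relation. This is where richness is genuinely used, because the maximal one-sided bundle of $\theta$ must be available. I would also confirm that the equivalence between upward closure of all upper contours and the pairwise monotonicity condition is justified. If it is not taken as given, it is a two-line check: take $\ell=r_R(\theta)$, and then $\tau\in U_\ell^R$ holds exactly when $\tau\succeq_R\theta$.
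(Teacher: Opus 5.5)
Your proof is correct, but it takes a different route from the paper's. You put every case through the two-sided equivalence of Proposition~\ref{prop:fullreport}. You check full-report normality, compute $\preceqM$ exactly, and then translate upward closure of upper contours and eligibility classes into pairwise monotonicity.

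The paper uses Proposition~\ref{prop:fullreport} only in its universal direction. For necessity, it notes the one-way implication $\theta\subseteq\theta'\Rightarrow\theta\preceqM\theta'$ under positive evidence, and its reverse under negative evidence. For sufficiency, it exhibits the maximal one-sided bundle $(\theta,\emptyset)$, $(\emptyset,\C\setminus\theta)$, or $(\theta,\C\setminus\theta)$. It checks that the bundle's certifier set lies in $U^R_{r_R(\theta)}$ or in $D_r$, and then cites Corollary~\ref{cor:signed}. The paper's proof therefore never verifies full-report normality. It also never needs the converse inclusion $M(\theta)\subseteq M(\tau)\Rightarrow\theta\subseteq\tau$, which you flag as the main obstacle; that step is absorbed into the single identity $\T(\theta,\emptyset)=\{\tau:\theta\subseteq\tau\}$.

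Your approach is the ``category-specific application'' viewpoint the paper mentions after the proposition but does not use in its proof. Its advantage is structural: each case reduces to computing the mimicry preorder.

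Its cost is that richness enters both the normality check and the exact identification of $\preceqM$. As written, your argument therefore does not isolate the paper's further point that necessity holds for any positive or negative technology, rich or not. You could recover that from the forward halves of your identifications (a) and (b) together with the universal direction of Proposition~\ref{prop:fullreport}.
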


The direction of these conditions follows from the information contained in
one-sided evidence. A positive bundle feasible for $\theta$ is also feasible
for every type containing $\theta$. Positive evidence can therefore support a
priority or eligibility claim only if that claim is preserved when categories
are added. A negative bundle feasible for $\theta$ is also feasible for every
type contained in $\theta$, so negative evidence can support only claims
preserved when categories are removed. Exact-profile evidence imposes no such
directional restriction because each type can distinguish itself from every
other type.

Richness is required for sufficiency, but not for necessity. Under any
positive technology, safe implementability requires inclusion monotonicity;
under any negative technology, it requires inclusion antitonicity. Richness
ensures that each type has the maximally informative one-sided certificate:
$(\theta,\emptyset)$ under positive evidence and
$(\emptyset,\C\setminus\theta)$ under negative evidence. Without these
certificates, a target rule may satisfy the relevant monotonicity condition
but remain unsupported by the available evidence.

Rich positive, rich negative, and exact-profile evidence all satisfy
full-report normality. Proposition~\ref{prop:onesided} can therefore also be
viewed as a category-specific application of
Proposition~\ref{prop:fullreport} and
Theorems~\ref{thm:chain-burden}--\ref{thm:reserve-burden}.

\subsection{When Reserve Eligibility Can Be Represented by a Chain}

Certification and representation are distinct requirements. Even when every
reserve eligibility class is content-certifiable, a chain may be unable to
represent those classes through priority cutoffs. The reason is that the
upper contours of a chain are necessarily nested, whereas reserve eligibility
classes may cross. The following proposition characterizes exactly when all
nonempty reserve classes can be represented as upper contours of a single chain.

\begin{proposition}[Chain representability of reserve eligibility]
\label{prop:nestedness}
Let
\[
  \mathcal D(\rho)
  :=
  \{D_r:r\in\Lcal,\ D_r\neq\emptyset\}
\]
denote the collection of nonempty reserve eligibility classes, with duplicate
classes identified. There exists a chain ranking $R$ on $\T$ such that every
class in $\mathcal D(\rho)$ is an upper contour of $R$ if and only if
$\mathcal D(\rho)$ is nested by inclusion: for every
$D,D'\in\mathcal D(\rho)$,
\[
  D\subseteq D'
  \quad\text{or}\quad
  D'\subseteq D.
\]
\end{proposition}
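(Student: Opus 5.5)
Both directions are elementary. Necessity rests on the fact that upper contours of any chain are nested. Sufficiency is a direct construction: order the distinct nonempty classes by inclusion and build a chain whose classes are the successive differences.

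\emph{Necessity.} Suppose $R=(R_1,\ldots,R_L)$ is a chain ranking and every class in $\mathcal D(\rho)$ is an upper contour of $R$. By Definition~\ref{def:uppercon}, $U_\ell^R=\{\theta:r_R(\theta)\le\ell\}$, so $\ell\le\ell'$ implies $U_\ell^R\subseteq U_{\ell'}^R$. Any two classes $D=U_\ell^R$ and $D'=U_{\ell'}^R$ are therefore comparable by inclusion, and $\mathcal D(\rho)$ is nested.

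\emph{Sufficiency.} Suppose $\mathcal D(\rho)$ is nested. After duplicates are identified it is a finite chain under strict inclusion, so we can list it as $D^1\subsetneq D^2\subsetneq\cdots\subsetneq D^J$ with $D^1\neq\emptyset$. Set $D^0:=\emptyset$ and define
\[
  R_j:=D^j\setminus D^{j-1}, \qquad j=1,\ldots,J.
\]
If $D^J\neq\T$, append the final class $R_{J+1}:=\T\setminus D^J$.

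These classes are nonempty: $D^1\neq\emptyset$, the inclusions are strict, and the appended class is nonempty by construction. They are also pairwise disjoint and cover $\T$. Hence $R$ is an ordered partition of $\T$, i.e.\ a chain ranking. Finally, $U_j^R=\bigcup_{k\le j}R_k=D^j$ for each $j\le J$, by telescoping the differences. So every class in $\mathcal D(\rho)$ is an upper contour of $R$.

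\emph{Boundary cases.} If $\mathcal D(\rho)$ is empty, take the one-class chain $R=(\T)$; the condition then holds vacuously. If $\T\in\mathcal D(\rho)$, it is represented by the non-proper contour $U_L^R=\T$. The statement asks only that each class be an upper contour, not a proper one, so this is allowed.

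There is no substantive obstacle. The only step needing care is checking that the constructed $R$ is a genuine ordered partition with nonempty classes. That is exactly why the proposition excludes empty eligibility classes, since $\emptyset$ is never an upper contour of a chain whose classes are nonempty, and why duplicate classes are identified.
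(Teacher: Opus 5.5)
Your proof is correct and follows essentially the same route as the paper. Necessity comes from the fact that upper contours of a single chain are nested, and sufficiency from taking the successive differences $D^k\setminus D^{k-1}$ (with $\T\setminus D^K$ appended when needed) as the chain's classes; your explicit checks that these blocks are nonempty and that the empty family is handled fill in details the paper states only briefly.
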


Necessity follows because any two upper contours of a common chain are nested.
For sufficiency, order the distinct nonempty eligibility classes as
\[
  D^1\subsetneq D^2\subsetneq\cdots\subsetneq D^K.
\]
The successive differences
\[
  D^1,\quad
  D^2\setminus D^1,\quad\ldots,\quad
  D^K\setminus D^{K-1},
\]
together with $\T\setminus D^K$ when it is nonempty, form a chain whose first
$k$ classes have union $D^k$. Thus, nested nonempty reserve classes can be
represented as priority cutoffs of a scalar rule. Cross-cutting classes cannot
be represented simultaneously as upper contours of any score, lexicographic
rule, or other chain ranking.

Proposition~\ref{prop:nestedness} concerns representation of eligibility
classes by upper contours, not equivalence of allocation procedures. For a
fixed chain, a nonempty reserve class that is not an upper contour yields an
allocation disagreement in the corresponding single-reserve contest.

\begin{proposition}[Chain--reserve divergence]
\label{prop:divergence}
Fix a target chain ranking $R$ and a reserve eligibility map $\rho$. Suppose
some nonempty eligibility class $D_r$ is not an upper contour of $R$.
Equivalently, suppose there exist $\theta,\tau\in\T$ such that
\[
  \theta\in D_r,
  \qquad
  \tau\notin D_r,
  \qquad
  \tau\succeq_R\theta.
\]
Then there is a two-applicant, one-unit profile on which the full-information
chain allocation differs from the allocation with one unit reserved for label
$r$ and no open capacity.
\end{proposition}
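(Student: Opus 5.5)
The plan has two parts: establish the stated equivalence, then build a profile $(\theta,\tau)$ with $q=1$ on which the two rules disagree.

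\textbf{The equivalence.} Suppose $D_r$ is not an upper contour of $R$. Take $\ell=\max_{\theta'\in D_r} r_R(\theta')$, which exists because $D_r\neq\emptyset$. Then $D_r\subseteq U^R_\ell$. If $D_r=U^R_\ell$, then $D_r$ would be an upper contour, so $D_r\subsetneq U^R_\ell$. Pick $\tau\in U^R_\ell\setminus D_r$ and $\theta\in D_r$ with $r_R(\theta)=\ell$. Then $r_R(\tau)\le\ell=r_R(\theta)$, that is, $\tau\succeq_R\theta$.

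Conversely, suppose such $\theta,\tau$ exist but $D_r=U^R_k$ for some $k$. Since $\theta\in D_r$, we have $r_R(\theta)\le k$, hence $r_R(\tau)\le r_R(\theta)\le k$. This gives $\tau\in U^R_k=D_r$, a contradiction. Only upper contours with index in $\{1,\ldots,L\}$ need to be considered here; the empty set is excluded because $D_r$ is nonempty.

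\textbf{The profile.} Take $n=2$, $q=1$, and the profile $(\theta_1,\theta_2)=(\theta,\tau)$. The reserve system has $\Lcal$ effectively reduced to $\{r\}$, with $q_r=1$ and $q_0=0$. Only applicant 1 is certified for $r$, since $r\in\rho(\theta)$ and $r\notin\rho(\tau)$. Feasibility requires reserve units to go only to applicants certified for $r$, so applicant 2 receives $0$.

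In the full-information chain allocation, split into two cases.
\begin{itemize}
\item If $r_R(\tau)<r_R(\theta)$, the cutoff class is $r_R(\tau)$, so $x_2^{R,1}=1$.
\item If $r_R(\tau)=r_R(\theta)$, the two applicants tie in the cutoff class and $x_2^{R,1}=1/2$.
\end{itemize}
Either way applicant 2 receives a positive probability under the chain but zero under the reserve rule, so the allocations differ.

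\textbf{Main subtlety.} The delicate point is that the reserve allocation must be pinned down by feasibility alone, whatever the downstream rule $\Phi^{\mathbf q}$. Applicant 2's zero follows from feasibility regardless of $\Phi$, and this suffices: the claimed disagreement is exhibited on applicant 2's coordinate, so we never need to know how much applicant 1 receives.

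A second point is the capacity convention. In the chain section $q\le n-1$, which with $n=2$ forces $q=1$, so the setup is consistent. The remaining labels in $\Lcal$ either get capacity zero or are irrelevant in the single-reserve contest. I would state the construction for the single-label reserve system with $q_r=1$ and $q_0=0$, as the proposition phrases it.
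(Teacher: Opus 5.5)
Your proof is correct and follows the paper's argument essentially step for step. You use the same choice $\ell=\max_{\theta'\in D_r}r_R(\theta')$ to extract the pair $(\theta,\tau)$, the same two-applicant profile with $r$ as the only active reserve and no open capacity, and the same observation that feasibility alone forces $\tau$'s reserve probability to zero while the chain gives her $1$ or $1/2$; your explicit proof of the converse direction of the equivalence and the cutoff computation are details the paper leaves implicit.
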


The disagreement in Proposition~\ref{prop:divergence} is representational,
not evidentiary. On the profile containing $\theta$ and $\tau$, feasibility
requires the reserve to give the ineligible type $\tau$ probability zero.
The chain gives $\tau$ probability one when she has strictly higher priority
and one-half when the types are tied. This difference can arise even when
the ranking and every reserve label are correctly certified.

Together, Propositions~\ref{prop:nestedness} and~\ref{prop:divergence} show
that cross-cutting reserve classes cannot all be upper contours of one scalar
ranking. For every proposed chain, at least one reserve class generates a
single-reserve contest whose allocation differs from the chain.

\subsection{A Stylized Housing Application}

This example separates two constraints: evidence must establish the claims
used by an allocation rule, and the rule must preserve the distinctions
needed for its intended priorities. Better evidence can resolve the first
constraint without resolving the second.

Let $\C=\C^+\cup\C^-$, where $\C^+=\{V,D,H,F\}$ denotes veteran status,
disability, homelessness, and family status, and $\C^-=\{T\}$ denotes an
adverse tenancy record. Types are $\theta\in\T=2^\C$. Consider the score
\[
  s(\theta)
  =\mathbf 1\{V\in\theta\}
   +2\mathbf 1\{D\in\theta\}
   +2\mathbf 1\{H\in\theta\}
   +\mathbf 1\{F\in\theta\}
   +2\mathbf 1\{T\notin\theta\},
\]
with higher scores receiving higher priority.

\emph{The evidence constraint.}
Positive evidence cannot establish clearance, while negative evidence cannot
establish favorable status. The score is neither inclusion-monotone nor
inclusion-antitone, so neither one-sided technology safely implements it
(Proposition~\ref{prop:onesided}). Suppose instead that every type can submit
\[
  m(\theta)
  :=\left(\theta\cap\C^+,\, 
          \{c\in\C^-:c\notin\theta\}\right).
\]
The first component certifies favorable characteristics that are present;
the second certifies adverse characteristics that are absent. Every type
compatible with this bundle has score at least $s(\theta)$, and $\theta$
itself is compatible. Its skeptical score is therefore exactly $s(\theta)$.
These claims must be jointly readable, possibly through an authenticated
portfolio of documents.

\emph{The representation constraint.}
Now reserve one unit for disability with clearance and one for homelessness
with clearance, with no open capacity. The eligibility classes are
\[
  D_{D\text{-res}}
  =\{\theta\in\T:D\in\theta,\ T\notin\theta\},
  \qquad
  D_{H\text{-res}}
  =\{\theta\in\T:H\in\theta,\ T\notin\theta\}.
\]
The signed bundles certify both classes, but neither class contains the
other. Proposition~\ref{prop:nestedness} therefore rules out representing
both as upper contours of a single chain.

For a simple allocation comparison, take three applicants with types
$\{D\}$, $\{V,F\}$, and $\{H\}$, all submitting their signed bundles.
Their scores are all $4$, so each receives a unit with probability $2/3$.
A reserve rule that fills each reserve whenever feasible instead assigns
its units to the disability and homelessness applicants. The difference
persists despite adequate evidence: the score treats different sources of
priority as interchangeable, whereas the reserves retain separate labels.

A different score could reproduce this particular allocation, but no score
can make both crossing eligibility classes upper contours.
\emph{Improving evidence can establish the relevant claims; it cannot make
a scalar ranking represent crossing reserve-eligibility classes as upper contours.}

\section{Conclusion}

This paper studies which allocation rules can be implemented when applicants
have private characteristics and can submit hard but partial evidence. The
central friction is that hard evidence verifies what is submitted without
necessarily revealing whether other relevant evidence has been withheld. We
show that a chain priority system is safely implementable exactly when every
proper upper contour of its ranking is content-certifiable. Under
label-specific applications, a reserve eligibility map is safely implementable
exactly when each reserve eligibility class is content-certifiable. In both
architectures, skeptical readings provide canonical implementations and
support dominant-strategy implementation of the corresponding full-information
allocations, with the reserve conclusion requiring a feasible, label-monotone
downstream rule.

When these conditions fail, the available evidence cannot support the original
target rule. We therefore construct canonical one-step repairs. A chain is
repaired through the smallest pointwise priority demotions, whereas a reserve
system is repaired by deleting unsupported labels while retaining the largest
certifiable part of each eligibility class. These architecture-specific
repairs reflect a deeper distinction. A chain removes unsupported priority
claims through pointwise demotion, while a reserve system removes
unsupported access to label-specific capacity. Moreover, nonempty reserve
eligibility classes can be represented as upper contours of a single chain
exactly when they are nested. Cross-cutting classes cannot all be priority
cutoffs of any scalar ranking, even when the underlying eligibility claims
are fully certifiable.

The broader implication is that allocation design and evidence design cannot
be separated. An institution must determine not only which characteristics
should confer priority, but also which type classes applicants must establish
and whether the available evidence can certify those classes, including claims
that combine favorable characteristics with the absence of adverse conditions.
The framework provides a direct procedure: identify the claims selected by the
allocation rule, test their certifiability, check nestedness when
representation by priority cutoffs is desired, and apply the canonical repair when the target
cannot be supported. Natural extensions would endogenize evidence acquisition
and certificate design and connect these proof burdens to axiomatic
foundations for chain and reserve procedures.

\newpage

\begin{appendices}

\section{Proofs}

\begin{proof}[Proof of Lemma~\ref{lem:cert}]
If a test $A$ is sound for $D$, then every $m\in A$ satisfies
$M^{-1}(m)\subseteq D$. Hence $A\subseteq A_D^*$, so $A_D^*$ is the largest
sound test for $D$. Its passing class is
\[
  \{\theta\in\T:M(\theta)\cap A_D^*\neq\emptyset\}
  =
  \bigcup_{m\in A_D^*}M^{-1}(m)
  =
  \mathcal I(D).
\]
Therefore, $A_D^*$ is complete for $D$ if and only if
$\mathcal I(D)=D$.

If some test $A$ certifies $D$, then $A\subseteq A_D^*$ by soundness, while
completeness of $A$ implies that every $\theta\in D$ passes $A_D^*$. Hence
$A_D^*$ is complete. Conversely, if $A_D^*$ is complete, then it is both sound
and complete and therefore certifies $D$. This proves the three equivalent
conditions and the maximality of $A_D^*$ among certifying tests.

Finally, $A_D^*$ certifies its own passing class $\mathcal I(D)$, so
$\mathcal I(D)$ is content-certifiable. If $\widetilde D\subseteq D$ is any
content-certifiable class, the union characterization just proved gives
\[
  \widetilde D
  =
  \bigcup_{m:\,M^{-1}(m)\subseteq\widetilde D}M^{-1}(m)
  \subseteq
  \bigcup_{m:\,M^{-1}(m)\subseteq D}M^{-1}(m)
  =
  \mathcal I(D).
\]
Thus, $\mathcal I(D)$ is the largest content-certifiable subclass of $D$.
\end{proof}

\begin{proof}[Proof of Proposition~\ref{prop:fullreport}]
Suppose $D$ is content-certifiable, $\theta\in D$, and
$\theta\preceqM\tau$. Let $A$ certify $D$. Completeness gives some
$m\in M(\theta)\cap A$. Since $M(\theta)\subseteq M(\tau)$, the same content is
feasible for $\tau$, so $\tau$ also passes $A$ and belongs to $D$. Hence every
content-certifiable class is upward closed under mimicry.

Now suppose the technology is full-report normal and $D$ is upward closed.
For each $\theta\in D$, finiteness and repeated application of full-report
normality yield $m^*(\theta)\in M(\theta)$ such that
\[
  M^{-1}\bigl(m^*(\theta)\bigr)
  \subseteq
  \bigcap_{m\in M(\theta)}M^{-1}(m).
\]
The reverse inclusion holds because $m^*(\theta)\in M(\theta)$. Therefore,
\[
  M^{-1}\bigl(m^*(\theta)\bigr)
  =
  \{\tau\in\T:M(\theta)\subseteq M(\tau)\}
  =
  \{\tau\in\T:\theta\preceqM\tau\}
  \subseteq D,
\]
where the final inclusion uses upward closure. Every $\theta\in D$ thus has a
feasible content whose certifier set is contained in $D$. Lemma~\ref{lem:cert}
then implies that $D$ is content-certifiable.
\end{proof}

\begin{proof}[Proof of Theorem~\ref{thm:chain-burden}]
For any reading $\kappa$, safety is equivalent to
\[
  \kappa(m)
  \ge
  \max_{\tau\in M^{-1}(m)}r_R(\tau)
  =
  \bar r_R(m)
  \qquad\text{for every }m\in\M.
\]
Hence, $\bar r_R$ is safe and is the pointwise most favorable safe reading.

We first prove the equivalence of conditions~(i) and~(iii). Suppose that a
safe reading $\kappa$ implements $R$. For each $\theta$, choose
$m\in M(\theta)$ attaining the minimum in the implementation condition, so
$\kappa(m)=r_R(\theta)$. Safety implies
$\bar r_R(m)\le r_R(\theta)$, while $\theta\in M^{-1}(m)$ implies
$\bar r_R(m)\ge r_R(\theta)$. Thus,
\[
  \bar r_R(m)=r_R(\theta),
\]
which is equivalent to
\[
  M^{-1}(m)\subseteq U_{r_R(\theta)}^R.
\]
This proves condition~(iii).

Conversely, suppose condition~(iii) holds. For every $m\in M(\theta)$,
$\theta\in M^{-1}(m)$ gives $\bar r_R(m)\ge r_R(\theta)$. The content supplied
by condition~(iii) gives the reverse inequality for at least one feasible
content. Therefore,
\[
  r_R(\theta)
  =
  \min_{m\in M(\theta)}\bar r_R(m)
  =
  \min_{m\in M(\theta)}
  \max_{\tau\in M^{-1}(m)}r_R(\tau)
  \qquad\text{for every }\theta\in\T.
\]
Thus, the skeptical reading safely implements $R$, proving condition~(i) and
the stated minimax representation.

We next prove the equivalence of conditions~(ii) and~(iii). Suppose every
proper upper contour is content-certifiable. If $\theta\in R_\ell$ with
$\ell<L$, Lemma~\ref{lem:cert} gives a content $m\in M(\theta)$ satisfying
$M^{-1}(m)\subseteq U_\ell^R$. If $\theta\in R_L$, any feasible content works
because $U_L^R=\T$. Hence, condition~(iii) holds. Conversely, suppose
condition~(iii) holds. Fix $\ell<L$ and $\theta\in U_\ell^R$. Writing
$k=r_R(\theta)\le\ell$, the content supplied by condition~(iii) satisfies
\[
  M^{-1}(m)\subseteq U_k^R\subseteq U_\ell^R.
\]
Lemma~\ref{lem:cert} then implies that $U_\ell^R$ is content-certifiable. This
proves condition~(ii).

It remains to establish the dominant-strategy statement. Suppose the
equivalent conditions hold, and let $\kappa$ be a safe reading that implements
$R$; one may take $\kappa=\bar r_R$. Define a mechanism that assigns each
submitted content $m_j$ the class $\kappa(m_j)$ and then applies the $q$-unit
chain cutoff rule. For each applicant $i$, choose
\[
  s_i(\theta)
  \in
  \arg\min_{m\in M(\theta)}\kappa(m).
\]
Implementation gives $\kappa(s_i(\theta))=r_R(\theta)$, and safety implies
that every feasible deviation is assigned a weakly larger, hence weakly worse,
class index. Since the cutoff allocation probability is weakly decreasing in
an applicant's class index, holding the other submissions fixed, $s_i$ is
weakly dominant. The induced allocation is $x^{R,q}$.

Conversely, suppose that a content-based mechanism $g$ and weakly dominant
feasible strategies $(s_1,\ldots,s_n)$ implement $x^{R,q}$. Fix an applicant
$i$ and a proper boundary $\ell<L$. Choose $\alpha\in R_1$ and
$\beta\in R_{\ell+1}$. Give the other $n-1$ applicants the profile consisting
of $q-1$ copies of type $\alpha$ and $n-q$ copies of type $\beta$, and let
$m_{-i}^\ell$ be their prescribed submissions. Define
\[
  A_\ell
  :=
  \{m\in\M:g_i(m,m_{-i}^\ell)=1\}.
\]
If the focal applicant's type belongs to $U_\ell^R$, the full-information rule
assigns her the remaining unit with probability one, so her prescribed content
belongs to $A_\ell$. If her type lies outside $U_\ell^R$, her prescribed
allocation probability is strictly below one. Were some feasible deviation to
belong to $A_\ell$, weak dominance of the prescribed strategy would require
its allocation probability to be at least one, a contradiction. Therefore,
\[
  U_\ell^R
  =
  \{\theta\in\T:M(\theta)\cap A_\ell\neq\emptyset\}.
\]
Thus, every proper upper contour is content-certifiable, proving
condition~(ii) and completing the proof.
\end{proof}

\begin{proof}[Proof of Theorem~\ref{thm:reserve-burden}]
For every content $m$,
\[
  \bar\eta_\rho(m)
  =
  \bigcap_{\tau\in M^{-1}(m)}\rho(\tau)
  =
  \{r\in\Lcal:M^{-1}(m)\subseteq D_r\}.
\]
A reserve reading $\eta$ is safe for $\rho$ if and only if
$\eta(m)\subseteq\bar\eta_\rho(m)$ for every $m$. Hence,
$\bar\eta_\rho$ is the pointwise largest safe reserve reading.

Suppose condition~(i) holds, and let $\eta$ be a safe reserve reading that
implements $\rho$. For each label $r$, define
\[
  A_r:=\{m\in\M:r\in\eta(m)\}.
\]
If $\theta\in D_r$, implementation gives some $m\in M(\theta)$ with
$r\in\eta(m)$, so $\theta$ passes $A_r$. If $\theta\notin D_r$, safety implies
that no feasible content of $\theta$ belongs to $A_r$. Thus, $A_r$ certifies
$D_r$, proving condition~(ii).

The equivalence of conditions~(ii) and~(iii) follows from
Lemma~\ref{lem:cert}. We now connect condition~(iii) to the fixed-point
representation. Since $m\in M(\theta)$ implies $\theta\in M^{-1}(m)$,
\[
  (\mathcal V\rho)(\theta)
  =
  \bigcup_{m\in M(\theta)}\bar\eta_\rho(m)
  \subseteq
  \rho(\theta)
  \qquad\text{for every }\theta.
\]
If condition~(iii) holds and $r\in\rho(\theta)$, then $\theta\in D_r$ and
there exists $m\in M(\theta)$ with $M^{-1}(m)\subseteq D_r$. Hence,
$r\in\bar\eta_\rho(m)$ and therefore $r\in(\mathcal V\rho)(\theta)$. Thus,
$\mathcal V\rho=\rho$. Conversely, if $\mathcal V\rho=\rho$, then for every
$r\in\rho(\theta)$ there is some $m\in M(\theta)$ such that
$r\in\bar\eta_\rho(m)$, equivalently $M^{-1}(m)\subseteq D_r$. Hence, the
fixed-point condition is equivalent to condition~(iii).

Under these conditions,
\[
  \rho(\theta)
  =
  \bigcup_{m\in M(\theta)}\bar\eta_\rho(m)
  \qquad\text{for every }\theta\in\T,
\]
so the skeptical reserve reading implements $\rho$ and condition~(i) follows.

For the dominant-strategy statement, fix for each type $\theta$ and each
$r\in\rho(\theta)$ a content $m_r(\theta)\in M(\theta)$ satisfying
$r\in\bar\eta_\rho(m_r(\theta))$. Let the target-attaining application set
$a_r^*(\theta)=m_r(\theta)$ when $r\in\rho(\theta)$ and
$a_r^*(\theta)=\bot$ otherwise. It certifies exactly $\rho(\theta)$. Any other
feasible application $a$ certifies a set
\[
  \{r\in\Lcal:a_r\neq\bot\text{ and }r\in\bar\eta_\rho(a_r)\}
  \subseteq
  \rho(\theta)
\]
by safety. Holding the other applicants' certified label sets fixed, label
monotonicity therefore makes $a^*(\theta)$ weakly dominant. Under these
strategies, the downstream rule receives the label profile
$(\rho(\theta_1),\ldots,\rho(\theta_n))$ and hence implements
$x^{\rho,\mathbf q}$.

Finally, suppose the institution requires one common content. If a safe
reserve reading $\eta$ implements $\rho$, then for every $\theta$ there is
$m\in M(\theta)$ such that $\eta(m)=\rho(\theta)$. Safety gives
$\eta(m)\subseteq\bar\eta_\rho(m)$, while feasibility of $m$ for $\theta$
gives $\bar\eta_\rho(m)\subseteq\rho(\theta)$. Hence,
\[
  \bar\eta_\rho(m)=\rho(\theta).
\]
Conversely, if every type has such a content, the skeptical reserve reading is
safe and that content yields exactly the type's target label set. Therefore,
$\bar\eta_\rho$ implements $\rho$ under the common-content requirement.
\end{proof}

\begin{proof}[Proof of Theorem~\ref{thm:repairs}]
We first consider chain priority systems. If $r\le r'$ pointwise, monotonicity
of the maximum and minimum operators gives $\mathcal S r\le\mathcal S r'$.
Moreover, for every $m\in M(\theta)$,
\[
  \max_{\tau\in M^{-1}(m)}r(\tau)\ge r(\theta),
\]
because $\theta\in M^{-1}(m)$. Taking the minimum over $m\in M(\theta)$ gives
$\mathcal S r\ge r$, so $\mathcal S$ is extensive.

To prove idempotence, fix $\theta$ and choose $m^*\in M(\theta)$ attaining the
minimum in $(\mathcal S r)(\theta)$. For every
$\tau\in M^{-1}(m^*)$, the same content $m^*$ is feasible for $\tau$, and
therefore
\[
  (\mathcal S r)(\tau)
  \le
  \max_{\sigma\in M^{-1}(m^*)}r(\sigma)
  =
  (\mathcal S r)(\theta).
\]
It follows that
\[
  \bigl(\mathcal S(\mathcal S r)\bigr)(\theta)
  \le
  \max_{\tau\in M^{-1}(m^*)}(\mathcal S r)(\tau)
  \le
  (\mathcal S r)(\theta).
\]
Extensivity applied to $\mathcal S r$ gives the reverse inequality. Hence,
$\mathcal S(\mathcal S r)=\mathcal S r$.

Now let $r^E=\mathcal S r^0$. Extensivity gives $r^E\ge r^0$, while
idempotence gives $\mathcal S r^E=r^E$. By
Theorem~\ref{thm:chain-burden}, the ranking induced by $r^E$ is safely
implementable by its skeptical reading. If $\widetilde r\ge r^0$ is any other
safely implementable relaxation, then $\mathcal S\widetilde r=\widetilde r$
and monotonicity gives
\[
  r^E
  =
  \mathcal S r^0
  \le
  \mathcal S\widetilde r
  =
  \widetilde r.
\]
Thus, $r^E$ is the pointwise smallest safely implementable relaxation of
$r^0$.

We next consider reserve systems. If $D\subseteq D'$, every content whose
certifier set is contained in $D$ also has its certifier set contained in
$D'$, so $\mathcal I(D)\subseteq\mathcal I(D')$. Hence, $\mathcal I$ is
monotone. Its definition gives $\mathcal I(D)\subseteq D$, so it is
contractive. Lemma~\ref{lem:cert} shows that $\mathcal I(D)$ is
content-certifiable and that a class is content-certifiable if and only if it
is a fixed point of $\mathcal I$. Therefore,
\[
  \mathcal I(\mathcal I(D))=\mathcal I(D),
\]
so $\mathcal I$ is idempotent. The same lemma gives
\[
  D_r^E=\mathcal I(D_r^0)
\]
as the largest content-certifiable subclass of $D_r^0$ for every $r$.

Define $\rho^E(\theta)=\{r\in\Lcal:\theta\in D_r^E\}$. For every type
$\theta$ and label $r$,
\[
  \begin{aligned}
  r\in(\mathcal V\rho^0)(\theta)
  &\Longleftrightarrow
  \exists m\in M(\theta)
  \text{ such that }M^{-1}(m)\subseteq D_r^0\\
  &\Longleftrightarrow
  \theta\in\mathcal I(D_r^0)=D_r^E\\
  &\Longleftrightarrow
  r\in\rho^E(\theta).
  \end{aligned}
\]
Thus, $\rho^E=\mathcal V\rho^0$. Contractiveness of $\mathcal I$ implies
$D_r^E\subseteq D_r^0$ for every $r$, and hence
$\rho^E(\theta)\subseteq\rho^0(\theta)$ for every $\theta$. Since each
$D_r^E$ is content-certifiable, Theorem~\ref{thm:reserve-burden} implies that
$\rho^E$ is safely implementable by the skeptical reading
$\bar\eta_{\rho^E}$; equivalently, $\mathcal V\rho^E=\rho^E$.

Finally, let $\widetilde\rho$ be any evidence-safe contraction of $\rho^0$,
and define
\[
  \widetilde D_r
  :=
  \{\theta\in\T:r\in\widetilde\rho(\theta)\}.
\]
By Theorem~\ref{thm:reserve-burden}, each $\widetilde D_r$ is
content-certifiable, while contraction gives
$\widetilde D_r\subseteq D_r^0$. Maximality of $D_r^E$ therefore yields
$\widetilde D_r\subseteq D_r^E$ for every $r$, or equivalently,
\[
  \widetilde\rho(\theta)\subseteq\rho^E(\theta)
  \qquad\text{for every }\theta\in\T.
\]
Thus, $\rho^E$ is the pointwise largest evidence-safe contraction of
$\rho^0$. If some values in the range of $r^E$ are unused, deleting the
corresponding empty classes and relabeling the remaining classes preserves the
induced ranking and allocation.
\end{proof}

\begin{proof}[Proof of Corollary~\ref{cor:signed}]
Under signed-bundle evidence,
\[
  M^{-1}(P,A)=\T(P,A).
\]
Part~(i) therefore follows directly from Theorem~\ref{thm:chain-burden}:
type $\theta$ can attain its target class safely if and only if it has some
feasible bundle satisfying
\[
  \T(P,A)\subseteq U_{r_R(\theta)}^R.
\]
Because $\theta\in\T(P,A)$ whenever $(P,A)\in G(\theta)$, the inner maximum in
the skeptical expression is at least $r_R(\theta)$; equality is attainable
exactly under the stated containment condition.

For part~(ii), condition~(iii) of Theorem~\ref{thm:reserve-burden} becomes: for
every label $r$ and every $\theta\in D_r$, there exists
$(P,A)\in G(\theta)$ such that
\[
  \T(P,A)\subseteq D_r.
\]
This is exactly the stated condition, and Lemma~\ref{lem:cert} gives the
equivalent union representation.
\end{proof}

\begin{proof}[Proof of Proposition~\ref{prop:onesided}]
We first consider chain priority systems. Under positive evidence, every
certificate feasible for $\theta$ is also feasible for every
$\theta'\supseteq\theta$. Thus,
\[
  \theta\subseteq\theta'
  \quad\Longrightarrow\quad
  \theta\preceqM\theta'.
\]
Every safely implementable upper contour is content-certifiable, so
Proposition~\ref{prop:fullreport} implies $\theta'\succeq_R\theta$. Hence,
inclusion monotonicity is necessary. Conversely, under rich positive evidence, type $\theta$ can submit
$(\theta,\emptyset)$, whose certifier set is
\[
  \T(\theta,\emptyset)
  =
  \{\tau\in\T:\theta\subseteq\tau\}.
\]
If $R$ is inclusion-monotone, this set is contained in
$U_{r_R(\theta)}^R$, so Corollary~\ref{cor:signed}(i) gives safe
implementation.

Under negative evidence,
\[
  \theta\subseteq\theta'
  \quad\Longrightarrow\quad
  M(\theta')\subseteq M(\theta),
\]
so $\theta'\preceqM\theta$. Safe implementation therefore requires
$\theta\succeq_R\theta'$, which is inclusion antitonicity. Conversely, under
rich negative evidence, type $\theta$ can submit
$(\emptyset,\C\setminus\theta)$, whose certifier set is
\[
  \T(\emptyset,\C\setminus\theta)
  =
  \{\tau\in\T:\tau\subseteq\theta\}.
\]
If $R$ is inclusion-antitone, this set is contained in
$U_{r_R(\theta)}^R$, so Corollary~\ref{cor:signed}(i) again gives safe
implementation. Under exact-profile evidence, type $\theta$ can submit
$(\theta,\C\setminus\theta)$, whose certifier set is $\{\theta\}$, so the
same condition holds for every target ranking.

We next consider reserve systems. Under positive evidence, every content
feasible for $\theta$ is feasible for every $\theta'\supseteq\theta$.
Because every implementable eligibility class is content-certifiable,
Proposition~\ref{prop:fullreport} implies
\[
  \theta\in D_r,\ \theta\subseteq\theta'
  \quad\Longrightarrow\quad
  \theta'\in D_r.
\]
Thus, inclusion monotonicity of $\rho$ is necessary. Conversely, if $\rho$ is
inclusion-monotone, then for every $\theta\in D_r$, the rich positive bundle
$(\theta,\emptyset)$ has certifier set
\[
  \{\tau\in\T:\theta\subseteq\tau\}\subseteq D_r.
\]
Corollary~\ref{cor:signed}(ii) therefore gives implementation.

Under negative evidence, the argument is reversed. If
$\theta\subseteq\theta'$, every negative content feasible for $\theta'$ is
feasible for $\theta$, so implementation requires
$\rho(\theta')\subseteq\rho(\theta)$. Conversely, if $\rho$ is
inclusion-antitone, then for every $\theta\in D_r$, the bundle
$(\emptyset,\C\setminus\theta)$ has certifier set
\[
  \{\tau\in\T:\tau\subseteq\theta\}\subseteq D_r.
\]
Corollary~\ref{cor:signed}(ii) gives implementation. Under exact-profile
evidence, every $\theta\in D_r$ has a certificate with certifier set
$\{\theta\}\subseteq D_r$, so every target eligibility map is implementable.
\end{proof}

\begin{proof}[Proof of Proposition~\ref{prop:nestedness}]
If $\mathcal D(\rho)$ is empty, the result is immediate. Necessity follows
because any two upper contours of one chain are nested. For sufficiency, order
the distinct nonempty classes as
$D^1\subsetneq\cdots\subsetneq D^K$ and define
$R_1=D^1$ and $R_k=D^k\setminus D^{k-1}$ for $k=2,\ldots,K$. If
$D^K\neq\T$, append $R_{K+1}=\T\setminus D^K$. These sets form an ordered
partition of $\T$, and its $k$th upper contour is
$\bigcup_{j=1}^kR_j=D^k$ for every $k\le K$.
\end{proof}

\begin{proof}[Proof of Proposition~\ref{prop:divergence}]
If $D_r$ is not an upper contour, there exist $\theta\in D_r$ and
$\tau\notin D_r$ with $\tau\succeq_R\theta$; otherwise $D_r$ is upward closed
and equals $U_\ell^R$ for
$\ell=\max_{\theta\in D_r}r_R(\theta)$. Consider a two-applicant profile with
these types and a one-unit system with $r$ as its only active reserve and no
open capacity. Reserve feasibility implies that $\tau$ receives probability
zero. Under the chain, $\tau$ receives probability one if
$\tau\succ_R\theta$ and probability one-half if $\tau\sim_R\theta$.
The allocations therefore differ.
\end{proof}

\section{Examples of rules -- Chain priorities}
\label{app:chain-examples}

Fix an ordered list of $K$ priority criteria. For each criterion $k$, let
\[
  z_k(\theta)\in\{0,1\}
\]
indicate whether type $\theta$ satisfies that criterion in the
priority-favored direction. For a favorable category $C_k$, for example, one
may take
\[
  z_k(\theta)=\mathbf{1}\{C_k\in\theta\},
\]
whereas a clearance criterion may be represented by
\[
  z_k(\theta)=\mathbf{1}\{C_k\notin\theta\}.
\]

\paragraph{Highest-priority-criterion rule.}
Suppose the criteria are ordered from highest to lowest priority. Define
\[
  h(\theta)
  =
  \min\{k:z_k(\theta)=1\},
\]
with $h(\theta)=K+1$ if type $\theta$ satisfies no criterion. Type $\theta'$
has at least as much priority as type $\theta$ whenever
\[
  h(\theta')\le h(\theta).
\]
Thus, an applicant is ranked according to the highest-priority criterion she
satisfies; lower-ranked criteria do not affect her position.

\paragraph{Lexicographic rule.}
Associate each type with the vector
\[
  z(\theta)
  =
  \bigl(z_1(\theta),\ldots,z_K(\theta)\bigr).
\]
Two types are compared at the first coordinate at which their vectors differ.
At that coordinate, the type with value one is ranked above the type with
value zero. Lower-ranked criteria matter only when all higher-ranked criteria
are tied.

\paragraph{Additive scoring rule.}
Let $w_1,\ldots,w_K>0$ be the weights assigned to the criteria, and define
\[
  s_w(\theta)
  =
  \sum_{k=1}^K w_k z_k(\theta).
\]
A type with a higher score receives higher priority, while types with the same
score are tied.

Each of these rules induces an ordered ranking of the finite type space.
Grouping types that are tied under the relevant rule produces an ordered
partition
\[
  R=(R_1,\ldots,R_L),
\]
which is the chain priority system studied in the main text. The three rules
differ in how they aggregate applicant characteristics: the
highest-priority-criterion rule retains only the best satisfied criterion, the
lexicographic rule applies hierarchical tie-breaking, and the scoring rule
permits numerical trade-offs across criteria.

\end{appendices}


\begin{thebibliography}{99}

\bibitem{BenPorathDekelLipman2014}
Ben-Porath, E., E. Dekel, and B. L. Lipman (2014).
``Optimal Allocation with Costly Verification.''
\emph{American Economic Review} 104(12), 3779--3813.

\bibitem{BenPorathDekelLipman2019}
Ben-Porath, E., E. Dekel, and B. L. Lipman (2019).
``Mechanisms with Evidence: Commitment and Robustness.''
\emph{Econometrica} 87(2), 529--566.

\bibitem{BenPorathLipman2012}
Ben-Porath, E., and B. L. Lipman (2012).
``Implementation with Partial Provability.''
\emph{Journal of Economic Theory} 147(5), 1689--1724.

\bibitem{BullWatson2007}
Bull, J., and J. Watson (2007).
``Hard Evidence and Mechanism Design.''
\emph{Games and Economic Behavior} 58(1), 75--93.

\bibitem{DurKominersPathakSonmez2018}
Dur, U., S. D. Kominers, P. A. Pathak, and T. S\"onmez (2018).
``Reserve Design: Unintended Consequences and the Demise of Boston's Walk
Zones.''
\emph{Journal of Political Economy} 126(6), 2457--2479.

\bibitem{EchenicheYenmez2015}
Echenique, F., and M. B. Yenmez (2015).
``How to Control Controlled School Choice.''
\emph{American Economic Review} 105(8), 2679--2694.

\bibitem{GreenLaffont1986}
Green, J. R., and J.-J. Laffont (1986).
``Partially Verifiable Information and Mechanism Design.''
\emph{Review of Economic Studies} 53(3), 447--456.

\bibitem{HafalirYenmezYildirim2013}
Hafalir, I. E., M. B. Yenmez, and M. A. Yildirim (2013).
``Effective Affirmative Action in School Choice.''
\emph{Theoretical Economics} 8(2), 325--363.

\bibitem{HagenbachKoesslerPerezRichet2014}
Hagenbach, J., F. Koessler, and E. Perez-Richet (2014).
``Certifiable Pre-Play Communication: Full Disclosure.''
\emph{Econometrica} 82(3), 1093--1131.

\bibitem{KoesslerPerezRichet2019}
Koessler, F., and E. Perez-Richet (2019).
``Evidence Reading Mechanisms.''
\emph{Social Choice and Welfare} 53(3), 375--397.

\bibitem{KominersSonmez2016}
Kominers, S. D., and T. S\"onmez (2016).
``Matching with Slot-Specific Priorities: Theory.''
\emph{Theoretical Economics} 11(2), 683--710.

\bibitem{LipmanSeppi1995}
Lipman, B. L., and D. J. Seppi (1995).
``Robust Inference in Communication Games with Partial Provability.''
\emph{Journal of Economic Theory} 66(2), 370--405.

\bibitem{PathakSonmezUnverYenmez2024}
Pathak, P. A., T. S\"onmez, M. U. \"Unver, and M. B. Yenmez (2024).
``Fair Allocation of Vaccines, Ventilators and Antiviral Treatments: Leaving
No Ethical Value Behind in Health Care Rationing.''
\emph{Management Science} 70(6), 3999--4036.

\bibitem{PerezRichetSkreta2026}
Perez-Richet, E., and V. Skreta (2026).
``Falsification-Proof Non-Market Allocation Mechanisms.''
Working paper, April 2026.

\bibitem{SonmezYenmez2022}
S\"onmez, T., and M. B. Yenmez (2022).
``Affirmative Action in India via Vertical, Horizontal, and Overlapping
Reservations.''
\emph{Econometrica} 90(3), 1143--1176.

\end{thebibliography}
\end{document}